\newcommand{\arxiv}[1]{\href{http://arxiv.org/#1}{arXiv:#1}}
\newcommand*{\mailto}[1]{\href{mailto:#1}{\nolinkurl{#1}}}
\newtheorem{theorem}{Theorem}[section]
\newtheorem{lemma}[theorem]{Lemma}
\newtheorem{corollary}[theorem]{Corollary}
\numberwithin{equation}{section}
\newcommand{\C}{\mathbb{C}}
\newcommand{\R}{\mathbb{R}}
\newcommand{\N}{\mathbb{N}}
\newcommand{\E}{\mathrm{e}}
\newcommand{\I}{\mathrm{i}}
\newcommand{\siu}{\sigma^{\mathrm{u}}}
\newcommand{\sil}{\sigma^{\mathrm{l}}}
\newcommand{\sipmu}{\sigma_\pm^{\mathrm{u}}}
\newcommand{\sipml}{\sigma_\pm^{\mathrm{l}}}
\newcommand{\sipmul}{\sigma_\pm^{\mathrm{u,l}}}
\newcommand{\simpul}{\sigma_\mp^{\mathrm{u,l}}}
\newcommand{\lau}{\la^{\mathrm{u}}}
\newcommand{\lal}{\la^{\mathrm{l}}}
\newcommand{\floor}[1]{\lfloor#1 \rfloor}
\newcommand{\dist}{\mathop{\mathrm{dist}}}
\renewcommand{\Im}{\mathop{\mathrm{Im}}}
\renewcommand{\Re}{\mathop{\mathrm{Re}}}
\newcommand{\clos}{\mathop{\mathrm{clos}}}
\newcommand{\inte}{\mathop{\mathrm{int}}}
\newcommand{\beq}{\begin{equation}}
\newcommand{\eeq}{\end{equation}}
\newcommand{\bal}{\begin{align}}
\newcommand{\eal}{\end{align}}
\newcommand{\nn}{\nonumber}
\newcommand{\ga}{\gamma}
\newcommand{\si}{\sigma}
\newcommand{\pa}{\partial}
\newcommand{\la}{\lambda}
\newcommand{\ov}{\overline}
\DeclareMathOperator{\wronsk}{\textup{\textsf{W}}}
\numberwithin{equation}{section}
\begin{document}

\title[On the KdV Equation with Steplike Finite-Gap Initial Data]{On the Cauchy Problem for the
Korteweg--de Vries Equation with Steplike Finite-Gap Initial Data II. Perturbations with Finite Moments}

\author[I. Egorova]{Iryna Egorova}
\address{B. Verkin Institute for Low Temperature Physics\\
47 Lenin Avenue\\61103 Kharkiv\\Ukraine}
\email{\mailto{iraegorova@gmail.com}}

\author[G. Teschl]{Gerald Teschl}
\address{Faculty of Mathematics\\ University of Vienna\\
Nordbergstrasse 15\\ 1090 Wien\\ Austria\\ and\\ International
Erwin Schr\"odinger
Institute for Mathematical Physics\\ Boltzmanngasse 9\\ 1090 Wien\\ Austria}
\email{\mailto{Gerald.Teschl@univie.ac.at}}
\urladdr{\url{http://www.mat.univie.ac.at/~gerald/}}

\thanks{Research supported by the Austrian Science Fund (FWF)
 under Grant No.\ Y330.}
\thanks{J. d'Analyse Math. {\bf 115:1}, 71--101 (2011)}

\keywords{KdV, inverse scattering, finite-gap background, steplike}
\subjclass[2000]{Primary 35Q53, 37K15; Secondary 37K20, 81U40}

\begin{abstract}
We solve the Cauchy problem for the Korteweg--de Vries equation with
steplike quasi-periodic, finite-gap initial conditions under the assumption
that the perturbations have a given number of derivatives with finite moments.
\end{abstract}

\maketitle

\section{Introduction}

The purpose of this paper paper is to investigate the Cauchy problem for the
Korteweg--de Vries (KdV) equation
\beq\label{KdV}
q_t(x,t) = -q_{xxx}(x,t) + 6 q(x,t) q_x(x,t), \qquad q(x,0)= q(x),
\eeq
(where subscripts denote partial derivatives as usual) for the case of real-valued
steplike initial conditions $q(x)$. More precisely, we will assume
that $q(x)$ is asymptotically close to (in general) different
quasi-periodic, finite-gap potentials $p_\pm(x)$ in the sense that
\beq\label{2.111}
\pm \int_0^{\pm \infty} \left|
\frac{d^n}{dx^n}\big( q(x) - p_\pm(x)\big) \right| (1+|x|^{m_0})dx <\infty,
\quad 0\leq n\leq n_0,
\eeq
for some positive integers $m_0, n_0$. Here by quasi-periodic, finite-gap potentials
we mean algebro-geometric, quasi-periodic, finite-gap potentials which arise naturally as
the stationary solutions of the KdV hierarchy as discussed in \cite{GH} (further details will
be given in Section~\ref{secist}). If \eqref{2.111} holds for all $m_0, n_0$ we  call $q$ a
Schwartz-type perturbation.

Ever since the seminal work of Gardner et al.\ \cite{GGKM} in 1967 the
inverse scattering transform has become one of the main tools used for solving this Cauchy problem.
Numerous articles have been devoted to this subject since the publication of the GGKM paper.
In particular, the case that the initial condition is asymptotically close to $p_\pm(x)=0$
is well understood. We  refer to the monographs by Eckhaus and Van Harten
\cite{EVH}, Marchenko \cite{M},  Novikov, Manakov, Pitaevskii, and Zakharov \cite{NMPZ},
or Faddeev and Takhtajan \cite{FT}.

 There are two natural cases which have been
considered in the past when extending this classical situation. The first case is that of equal quasi-periodic, finite-gap
potentials $p_-(x)=p_+(x)$ and the second is the case of steplike constant asymptotics
$p_\pm(x) = c_\pm$ (with $c_- \neq c_+$). The aim of our present paper is to
combine both cases and to solve some open problems in these special
cases (to be discussed in detail below) along the way.

The underlying scattering theory in the case of asymptotically periodic solutions
was first investigated by Firsova \cite{F1}--\cite{F3}.
The first ones to consider the Cauchy problem with a periodic background
seem to be Kuznetsov and Mikha\u\i lov \cite{kumi}, who
informally treated the Korteweg--de Vries equation with the Weierstra{\ss}
elliptic function as background solution. It turns out, due to the poles of the
Baker--Akhiezer functions, which reflect the fact that the underlying hyperelliptic
Riemann surface is no longer simply connected, that the periodic case is much
more complicated. The only known results concerning the
existence of the solution seem to be by Ermakova \cite{Er}, \cite{Er1} and
Firsova \cite{F4} (where the evolution of the scattering data for periodic background
was given). However, both works are incomplete from the point of view of
a rigorous application of the inverse scattering method. Surprisingly, much
more is know about the asymptotical behavior
(assuming existence) of such solutions; see for example \cite{Ba},
\cite{Bik}--\cite{Bik2}, \cite{iu}, \cite{Kh}--\cite{KhS}, \cite{N}. A complete
and rigorous treatment of the inverse scattering transform for the KdV equation in the case
of initial conditions which are Schwartz-type perturbations of finite-gap solutions
was given only recently by Grunert and the present authors \cite{EGT}.

Let us now turn to the case of steplike constant potentials, $p_\pm(x)=c_\pm$. The foundations
for scattering theory are completely understood and were given in Buslaev and Fomin \cite{BF},
Davies and Simon \cite{DS}, Cohen and Kappeler \cite{CK0}, Gesztesy \cite{G}, and
Aktosun \cite{A}.

The corresponding Cauchy problem for the KdV equation was first investigated by Khruslov
 \cite{Kh}, who derives the time evolution of the scattering data and analyzes the
long-time asymptotics. Later, Cohen \cite{C} solved the case that $q(x)$ is the Heaviside step function.
Kappeler \cite{Kap}, based on some advances in scattering theory of Cohen and Kappeler \cite{CK1}, showed how to handle general initial conditions with only a fixed number
of moments finite.
The time evolution of the scattering data for the entire KdV hierarchy was computed
recently by Khasanov and Urazboev \cite{KU}.
However, while Kappeler's result is impressive from a technical point of view, it still does not
give a satisfactory answer, since it only determines the decay properties of the solution
near one side, whereas only very mild
information is given  concerning the decay properties at the other side. In particular, even if one starts with a Schwartz-type initial condition,
the results in \cite{Kap} do not guarantee that the solution stays within this
class. The reason for this is that \cite{Kap} (as well as \cite{C}) does not use the full
inverse scattering machinery but only a half-sided approach. For further results, where
the initial condition is supported on a half-line, see Rybkin \cite{ry} and
the references therein. The case of power-like
asymptotic behavior (including some unbounded initial conditions) was investigated by
Bondareva and Shubin \cite{Bo}, \cite{BS};see also \cite{KPST} for the case of the mKdV equation.
Finally, we mention that in the discrete steplike finite-gap case (Toda lattice), the same
problem was completely solved in \cite{EMT1}. For  analysis of the  corresponding
long-time asymptotic behavior, see \cite{EBM}, \cite{dkkz}, \cite{km2}, \cite{kt},
\cite{kt2}, \cite{krt2} and \cite{vdo}.

To state our main result we  denote the spectra of
the one-dimensional finite-gap Schr\"{o}dinger operators $L_\pm = - \pa_x^2 + p_\pm$
associated with the potentials $p_\pm(x)$ by
\beq
\sigma_\pm = [E_0^\pm, E_1^\pm]\cup\dots\cup[E_{2j-2}^\pm,
E_{2j-1}^\pm]\cup\dots\cup[E_{2r_\pm}^\pm,\infty).
\eeq
The various possible locations of the two spectra are illustrated in the  following example.
%-------------------%
\vskip 0.2cm\noindent
{\bf Example.}
Let $L_+$ be the two-band operator with spectrum
$\sigma_+=[E_1, E_2]\cup[E_4, +\infty)$ and $L_-$ the
three band operator with spectrum $\sigma_-=[E_1, E_2]\cup[E_3,
E_4]\cup [E_5,+\infty)$, where $E_1<E_2<\cdots<E_5$.
%-------------------%
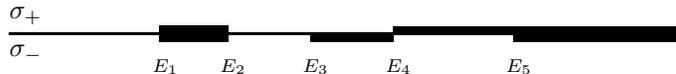
\begin{figure}[ht]
\begin{picture}(11,1.2)
\put(1,0.2){$\sigma_-$} \put(1,0.7){$\sigma_+$}

\put(1,0.5){\line(1,0){8}} \put(3,0.4){\rule{9mm}{2mm}}
\put(2.9,0){$\scriptstyle E_1$} \put(3.8,0){$\scriptstyle E_2$}
\put(5,0.4){\rule{11mm}{1mm}} \put(4.9,0){$\scriptstyle E_3$}
\put(6,0){$\scriptstyle E_4$}
\put(6.1,0.5){\rule{39mm}{1mm}}
\put(7.7,0.4){\rule{23mm}{1mm}} \put(7.6,0){$\scriptstyle E_5$}
\end{picture}
\caption{Typical locations of $\sigma_-$ and
$\sigma_+$.}\label{figsi}
\end{figure}

To shed some additional light on this we recall that the Marchenko kernel $F_\pm(x,y,t)$
(cf.\ \eqref{Fhat}) consists of three summands $F_\pm(x,y,t)= F_{\pm,D}(x,y,t) + F_{\pm,H}(x,y,t)
+ F_{\pm,R}(x,y,t)$, the first summand being a sum over all eigenvalues, the second one an integral over
$\si_\mp\backslash\si_\pm$, and the last one an integral over $\si_\pm$. The crucial part is
to show decay properties of $F_\pm(x,y,t)$ (and its partial derivatives). The first term
$F_{\pm,D}(x,y,t)$ is as nice as one can wish for and can thus be ignored. To get the
necessary decay for the remaining two terms one needs to use integration by parts.
In the classical case (and more general at
points $E_1$ and $E_2$ in our example), the corresponding boundary terms arising during
integration by parts will vanish and the required decay follows. However in a steplike
situation (at points $E_4$ and also $E_5$ for the $-$ case), this is no longer true.
Moreover, at a point like $E_5$ the integrand of $F_{+,R}$ has a non-differentiable singularity
which prevents an immediate integration by parts. By just working with the other kernel
one can evade this obstacle at the price of loosing the information about decay of the solution
at this side. Clearly these problems evaporate if points like $E_4$ and $E_5$ in our
example are absent. This was the case analyzed in \cite{EGT}. (In \cite{EGT} are also found
some necessary technical ingredients, which we  use freely here). Note that while this
restriction (which says that the respective spectral bands either coincide
or are disjoint) excludes the steplike constant case, it includes the case of short-range
perturbations of arbitrary quasi-periodic, finite-gap solutions.

It is the aim of the present paper to overcome these problems. To this end, rather than looking
at the terms $F_{\pm,H}(x,y,t)$ and $F_{\pm,R}(x,y,t)$ individually, we will in fact show that the
boundary terms mutually cancel. While this sounds like a pretty straightforward strategy,
this cancellation is by no means obvious and is nothing short of a small miracle.
Interestingly enough, points like $E_4$ and $E_5$ (the first one being absent in the
steplike constant case) turn out to require somewhat different miracles,  the first point
being more involved.

Our result  settles the aforementioned open problem of
steplike constant Schwartz-type perturbations as a special case. Moreover, based on
the recent advances in inverse scattering theory with steplike quasi-periodic, finite-gap backgrounds in
\cite{BET} (cf.\ also \cite{GNP}) and our preparations in \cite{EGT}, we are able to handle
not only steplike constant but also arbitrary steplike quasi-periodic, finite-gap backgrounds. For more on the
history of this problem and additional literature, see \cite{EGT}.

Next, let us state our main result.
Denote by $C^n(\R)$ the set of functions $x\in\R \mapsto q(x)\in \R$ which have $n$ continuous
derivatives with respect to $x$ \footnote{here $C^0(\mathbb R)= C(\mathbb R)$}  and by $C^n_k(\R^2)$ the set of functions $(x,t)\in\R^2 \mapsto q(x,t)\in \R$
which have $n$ continuous derivatives with respect to $x$ and $k$ continuous derivatives with respect to $t$.

\begin{theorem}\label{theor1}
Let $p_\pm(x,t)$ be two real-valued, quasi-periodic, finite-gap solutions of the KdV equation
corresponding to arbitrary quasi-periodic, finite-gap initial data $p_\pm(x)=p_\pm(x,0)$.
Let $m_0\geq 8$ and $n_0\geq m_0+5$ be fixed natural numbers.
Suppose that $q(x)\in C^{n_0}(\R)$ is a real-valued function such that
\eqref{2.111} holds. Then there exists a unique classical solution $q(x,t)\in C^{n_0 - m_0-2}_1(\R^2)$
of the initial-value problem for the KdV equation \eqref{KdV} satisfying
\beq \label{1.71}
\pm \int_0^{\pm \infty} \left| \frac{\pa^n}{\pa x^n} \big( q(x,t) -
p_\pm(x,t)\big) \right| (1+|x|^{\floor{\frac{m_0}{2}} - 2})dx
<\infty, \quad n \leq n_0 - m_0 -2,
\eeq
and
\beq \label{1.71t}
\pm \int_0^{\pm \infty} \left| \frac{\pa}{\pa t} \big( q(x,t) -
p_\pm(x,t)\big) \right| (1+|x|^{\floor{\frac{m_0}{2}} - 2})dx
<\infty,
\eeq
for all $t\in\R$.
\end{theorem}

In particular, this theorem shows that the KdV equation
has a solution in the class of steplike Schwartz-type perturbations of finite-gap potentials:

\begin{corollary}
Let $p_\pm(x,t)$ be two real-valued, quasi-periodic, finite-gap solutions of the KdV equation
corresponding to arbitrary quasi-periodic, finite-gap initial data $p_\pm(x)=p_\pm(x,0)$.
In addition, suppose, that $q(x)$ is a steplike Schwartz-type perturbation of $p_\pm(x)$.
Then the solution $q(x,t)$ of the initial-value problem for the KdV equation \eqref{KdV}
is a steplike Schwartz-type perturbation of $p_\pm(x,t)$ for all $t\in\R$.
\end{corollary}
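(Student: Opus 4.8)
The plan is to derive the corollary from Theorem~\ref{theor1} by applying it along a sequence of parameter pairs $(m_0,n_0)$ tending to infinity and gluing the resulting solutions together via uniqueness. The starting observation is that a Schwartz-type perturbation satisfies \eqref{2.111} for \emph{every} admissible pair $m_0\geq 8$, $n_0\geq m_0+5$; hence for each such pair Theorem~\ref{theor1} produces a classical solution in $C^{n_0-m_0-2}_1(\R^2)$ obeying the weighted estimate \eqref{1.71} with moment weight $\floor{m_0/2}-2$ for all $n\leq n_0-m_0-2$. The crucial quantitative feature is that both guaranteed indices, the number of finite moments $\floor{m_0/2}-2$ and the number of finite derivatives $n_0-m_0-2$, diverge as the data are taken smoother and faster decaying.

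Concretely, I would fix any sequence with $m_k\to\infty$ and $n_k-m_k\to\infty$, for instance $m_k=2k$, $n_k=4k$, so that the hypotheses hold for all large $k$, and denote by $q_k(x,t)$ the corresponding solution from Theorem~\ref{theor1}. For $k'>k$ one has $\floor{m_{k'}/2}-2\geq\floor{m_k/2}-2$ and $n_{k'}-m_{k'}-2\geq n_k-m_k-2$, so $q_{k'}$ is at least as smooth and decays at least as fast as required for the pair $(m_k,n_k)$; in particular $q_{k'}$ lies in the very class $C^{n_k-m_k-2}_1(\R^2)$ satisfying \eqref{1.71} in which, by Theorem~\ref{theor1}, $q_k$ is the \emph{unique} solution. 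Therefore $q_{k'}=q_k$, and all the $q_k$ coincide with a single function $q(x,t)$.

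This common solution then satisfies \eqref{1.71} for every $k$, i.e.\ with arbitrarily large moment exponent $\floor{m_k/2}-2$ and for arbitrarily many derivatives $n\leq n_k-m_k-2$. Since both indices are unbounded, $\frac{\pa^n}{\pa x^n}\big(q(\cdot,t)-p_\pm(\cdot,t)\big)$ is integrable against $(1+|x|^m)$ on the respective half-line for every $m$ and every $n$, which is exactly the assertion that $q(x,t)$ is a steplike Schwartz-type perturbation of $p_\pm(x,t)$ for all $t$. The only delicate point is the gluing step: the solution classes for different parameters must be nested, which is precisely why the sequence is chosen with $n_k-m_k$ increasing alongside $m_k$---once this is arranged, the uniqueness clause of Theorem~\ref{theor1} does all the remaining work.
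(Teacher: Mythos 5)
Your proposal is correct and follows essentially the same route as the paper, which states the corollary as an immediate consequence of Theorem~\ref{theor1} by applying it for all admissible pairs $(m_0,n_0)$ and noting that both the guaranteed number of moments $\floor{m_0/2}-2$ and of derivatives $n_0-m_0-2$ can be made arbitrarily large. Your explicit treatment of the gluing step via the uniqueness clause (checking that the solution classes are nested along a sequence with $m_k\to\infty$ and $n_k-m_k\to\infty$) is a careful spelling-out of what the paper leaves implicit, and it is sound.
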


The above results can also be used to solve  analogous Cauchy problems
for the modified KdV equation \cite{ET3}.
Furthermore, it might also be of independent interest that for uniqueness the following weaker requirement is
sufficient.

\begin{theorem}
Let $p_\pm(x,t)$ be two real-valued, quasi-periodic, finite-gap solutions of the KdV equation
corresponding to arbitrary quasi-periodic, finite-gap initial data $p_\pm(x)=p_\pm(x,0)$.
Suppose $q(x,t)$ is a solution of the KdV Cauchy problem satisfying
\beq\label{conduniq}
\pm \int_0^{\pm \infty} \left( |q(x,t) - p_\pm(x,t)| + \left| \frac{\pa}{\pa t} \big( q(x,t) -
p_\pm(x,t)\big) \right| \right) (1+x^2)dx <\infty,
\eeq
then $q(x,t)$ is unique in this class of solutions.
\end{theorem}

\begin{proof}
The assumption are sufficient to prove the time evolution of the scattering data
\cite[Lemma~5.3]{EGT}. Moreover, by \cite[Corollary 4.4]{BET} the scattering
data uniquely determine $q(x,t)$ and the claim follows.
\end{proof}

\section{The inverse scattering transform for the KdV equation with steplike
finite-gap initial data}
\label{secist}

In \cite{EGT}, we  established the inverse scattering transform for the KdV
equation in the case of Schwartz-type perturbations. In this section, we
review the necessary steps and identify the changes required  for the
present, more general, situation. These changes are implemented in
the next section. For further information and for the history of
finite-gap solutions see, for example, \cite{GH}, \cite{GRT},
\cite{M}, or \cite{NMPZ}. For further information on the underlying scattering
theory and its history see \cite{BET}.

To set the stage let
\beq
L_\pm(t) =-\frac{d^2}{d x^2} +p_\pm(x,t)
\eeq
be two one-dimensional Schr\"{o}dinger operators, corresponding to
two real-valued, quasi-periodic, finite-gap solutions $p_\pm(x,t)$ of the KdV equation that are associated
with the spectra
\beq\label{1.61}
\sigma_\pm = [E_0^\pm, E_1^\pm]\cup\dots\cup[E_{2j-2}^\pm,
E_{2j-1}^\pm]\cup\dots\cup[E_{2r_\pm}^\pm,\infty)
\eeq
and the Dirichlet divisors
\beq\label{divisor}
\left\{\big(\mu_1^\pm(t),\si_1^\pm(t)\big), \dots, \big(\mu_{r_\pm}^\pm(t),
\si_{r_\pm}^\pm(t)\big)\right\},
\eeq
respectively. Here we assume without loss of
generality that all gaps are open, that is, $E_{2j-1}^\pm<E_{2j}^\pm$ for
$j=1,2,...,r_\pm$. We will abbreviate $\mu_j^\pm(0)=\mu_j^\pm$,
$\si_j^\pm(0)=\si_j^\pm$.

Let us cut the complex plane along the spectrum $\sigma_\pm$ and
denote the upper and lower sides of the cuts by $\sipmu$ and
$\sipml$. Denote the corresponding points of the cuts by
$\lau$ and $\lal$, respectively. In particular, this means
\beq
f(\lau) := \lim_{\varepsilon\downarrow0} f(\la+\I\varepsilon),
\qquad f(\lal) := \lim_{\varepsilon\downarrow0}
f(\la-\I\varepsilon), \qquad \la\in\sigma_\pm.
\eeq
Set
\beq\label{1.0}
Y_\pm(\la)=-\prod_{j=0}^{2r_\pm} (\la-E_j^\pm),
\eeq
and
introduce the functions
\beq\label{1.88}
g_\pm(\la,t)=
-\frac{\prod_{j=1}^{r_\pm}(\la - \mu_j^\pm(t))}{2 Y_\pm^{1/2}(\la)},
\eeq
where the branch of the square root is chosen such that
\beq\label{1.8}
\frac{1}{\I} g_\pm(\lau,t) = \Im(g_\pm(\lau,t))  >0
\quad \mbox{for}\quad \la\in\sigma_\pm,\quad t\in\R_+.
\eeq
Denote by
\beq\label{psin}
\psi_\pm(\la,x,t)=c_\pm(\la,x,t)+ m_\pm(\la,t)s_\pm(\la,x,t)
\eeq
the  Weyl solutions of the equations
\beq\label{eqqp}
L_\pm(t)y=\la y,
\eeq
normalized according to $\psi_\pm(\la,0,t)=1$ and such that $\psi_\pm(\la,\cdot,t)\in
L^2(\R_\pm)$ for $\la\in\C \setminus\si_\pm$. Here, $m_\pm(t)$ are
the Weyl functions and $c_\pm(\la,x,t)$ and $s_\pm(\la,x,t)$ are
solutions of \eqref{eqqp}, that satisfy the initial conditions
\beq
c_\pm(\la,0,t)=s_\pm^\prime(\la,0,t)=1, \quad s_\pm(\la,0,t)=
c_\pm^\prime(\la,0,t)=0.
\eeq
The functions $\psi_\pm$ admit the well-known representation
\beq\label{1.23}
\psi_\pm(\la,x,t)=u_\pm(\la,x,t)\E^{\pm\I\theta_\pm(\la)x},
\quad\la\in\C\setminus\si_\pm,
\eeq
where  $\theta_\pm(\la)$ are the
quasimomenta and the functions $u_\pm(\la,x,t)$ are quasiperiodic
with respect to $x$ with the same basic frequencies as the
potentials $p_\pm(x,t)$. The quasimomenta are holomorphic for
$\la\in\C\setminus\si_\pm$ and normalized according to
\beq\label{1.24}
\frac{d\theta_\pm}{d\la}>0 \quad
\mbox{for}\quad\la\in\sipmu,\qquad \theta_\pm(E_0^\pm)=0.
\eeq
This normalization implies
\beq\label{1.25}
\frac{d\theta_\pm}{d\la}=\frac{\I\prod_{j=1}^{r_\pm}(\la -
\zeta_j^\pm)}{ Y_\pm^{1/2}(\la)},\qquad \zeta_j^\pm\in(E_{2j-1}^\pm,
E_{2j}^\pm),
\eeq
and therefore, the quasimomenta are real-valued on
$\sipmul$. Note, that in the case where $p_\pm(x,t)\equiv0$ we have
$\theta_\pm(\la)=\sqrt{\la}$, $u_\pm(\la,x,t)\equiv 1$ and
$m_\pm(\la,t)=\pm\I\sqrt\la$. In the general finite-gap cases the
two Weyl $m$-functions associated with $L_\pm$ are given by
(\cite[eq.~(1.165)]{GH})
\beq\label{1.29}
m_\pm(\la,t)=\frac{H_\pm(\la,t)\pm
Y_\pm^{1/2}(\la)}{\prod_{j=1}^{r_\pm}(\la - \mu_j^\pm(t))}, \quad
\breve m_\pm(\la,t)=\frac{H_\pm(\la,t)\mp
Y_\pm^{1/2}(\la)}{\prod_{j=1}^{r_\pm}(\la - \mu_j^\pm(t))}.
\eeq
 Here $H_\pm(\la,t)$ are polynomials in $\la$ of
$\deg(H_\pm)\leq r_\pm-1$ with real-valued coefficients which are smooth
with respect to $t$. Moreover,
\beq\label{ischez}
H_\pm(\mu_j^\pm(t),t)=0\quad\mbox{for}\quad
\mu_j^\pm(t)\in\pa\si_\pm.
\eeq
 Associated with the second Weyl
$m$-function $\breve m_\pm(\la,t)$ is the second Weyl solution
\begin{align}\nn
\breve\psi_\pm(\la,x,t) &= c_\pm(\la,x,t)+\breve m_\pm(\la,t) s_\pm(\la,x,t)\\\label{1.30}
&=  \breve u_\pm(\la,x,t)\E^{\mp\I\theta_\pm(\la)x},
\quad\la\in\C\setminus\si_\pm,
\end{align}
that satisfies $\breve\psi_\pm(\la,\cdot,t)\in L^2(\R_\mp)$ for $\la\in\C
\setminus\si_\pm$. The Wronski determinant, $\wronsk(f,g)(x)=f(x)g'(x)-f'(x)g(x)$, of the functions $\psi_\pm$ and
$\breve\psi_\pm$ is given by
\beq\label{wrpsi}
\wronsk(\psi_\pm(\la,.,t),\breve\psi_\pm(\la,.,t)) = \pm g_\pm(\la,t)^{-1}.
\eeq
Introduce the Lax operators corresponding to the finite-gap solutions $p_\pm(x,t)$,
\begin{align}\label{Lop}
L_\pm(t) &= -\pa_x^2 + p_\pm(x,t),\\
P_\pm(t) &=  -4\pa_x^3 + 6p_\pm(x,t)\pa_x +3 \pa_x p_\pm(x,t).
\end{align}
Then the following result is valid (\cite{BBEIM}, \cite{GH})
\begin{lemma}\label{lemweyl1}
The functions
\beq\label{1.37}
\hat\psi_\pm(\la,x,t) = \E^{\alpha_\pm(\la,t)} \psi_\pm(\la,x,t),
\eeq
where
\beq\label{1.38}
\alpha_\pm(\la,t) := \int_0^t \left(2(p_\pm(0,s) + 2\la)
m_\pm(\la,s) - \frac{\pa p_\pm(0,s)}{\pa x}\right)ds,
\eeq
satisfy the system of equations
\begin{align}\label{LP1}
L_\pm(t)\hat\psi_\pm &= \la\hat\psi_\pm,\\ \label{LP2}
\frac{\pa\hat\psi_\pm}{\pa t} &= P_\pm(t)\hat\psi_\pm.
\end{align}
\end{lemma}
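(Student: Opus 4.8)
The plan is to treat the two equations in \eqref{LP1}--\eqref{LP2} separately. Equation \eqref{LP1} is immediate: the gauge factor $\E^{\alpha_\pm(\la,t)}$ in \eqref{1.37} depends only on $t$, so it commutes with $L_\pm(t)$, and \eqref{eqqp} already gives $L_\pm(t)\psi_\pm=\la\psi_\pm$; hence $L_\pm(t)\hat\psi_\pm=\E^{\alpha_\pm}L_\pm(t)\psi_\pm=\la\hat\psi_\pm$. All of the content lies in the temporal equation \eqref{LP2}.

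For \eqref{LP2} I would run the standard Lax-pair argument. Since $p_\pm(x,t)$ solves the KdV equation \eqref{KdV}, the operators in \eqref{Lop} satisfy the Lax relation $\pa_t L_\pm=[P_\pm,L_\pm]$ (as a multiplication operator this reads exactly $\pa_t p_\pm=-p_{\pm,xxx}+6p_\pm p_{\pm,x}$). Differentiating $L_\pm\hat\psi_\pm=\la\hat\psi_\pm$ in $t$, substituting $\pa_t L_\pm=P_\pm L_\pm-L_\pm P_\pm$, and using $L_\pm\hat\psi_\pm=\la\hat\psi_\pm$ once more, the usual cancellation yields
\beq
L_\pm(t)\big(\pa_t\hat\psi_\pm - P_\pm\hat\psi_\pm\big) = \la\big(\pa_t\hat\psi_\pm - P_\pm\hat\psi_\pm\big).
\eeq
Thus $\phi_\pm:=\pa_t\hat\psi_\pm-P_\pm\hat\psi_\pm$ is again a solution of $L_\pm(t)y=\la y$. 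From the Floquet representation \eqref{1.23}, and since the quasimoment $\theta_\pm$ together with the decaying factor $\E^{\pm\I\theta_\pm x}$ is $t$-independent while $u_\pm$ and its $x$- and $t$-derivatives are quasiperiodic, hence bounded, both $\pa_t\hat\psi_\pm$ and $P_\pm\hat\psi_\pm$ inherit the $L^2(\R_\pm)$ decay of $\hat\psi_\pm$. Therefore $\phi_\pm\in L^2(\R_\pm)$, and since for $\la\in\C\setminus\si_\pm$ the $L^2(\R_\pm)$ solution is unique up to a scalar, we must have $\phi_\pm=c(\la,t)\hat\psi_\pm$ for some coefficient $c$.

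It remains to show $c\equiv 0$, and this is the only place where the precise form of $\alpha_\pm$ in \eqref{1.38} enters; I expect this evaluation at $x=0$ to be the crux of the argument. Since $\hat\psi_\pm(\la,0,t)=\E^{\alpha_\pm}\neq 0$, it suffices to check $\phi_\pm(\la,0,t)=0$. On the one hand, $\pa_t\hat\psi_\pm(\la,0,t)=\dot\alpha_\pm\,\E^{\alpha_\pm}$ with $\dot\alpha_\pm=2(p_\pm(0,t)+2\la)m_\pm(\la,t)-\pa_x p_\pm(0,t)$. On the other hand, using the normalization $\psi_\pm(\la,0,t)=1$, $\psi_\pm'(\la,0,t)=m_\pm$ together with the ODE $\psi_\pm''=(p_\pm-\la)\psi_\pm$ to evaluate $\psi_\pm''(0)$ and $\psi_\pm'''(0)$, a short computation of $P_\pm\psi_\pm$ at $x=0$ gives precisely $(P_\pm\hat\psi_\pm)(\la,0,t)=\big(2(p_\pm(0,t)+2\la)m_\pm-\pa_x p_\pm(0,t)\big)\E^{\alpha_\pm}$. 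These two boundary values coincide, so $\phi_\pm(\la,0,t)=0$ and hence $c\equiv 0$, which is exactly \eqref{LP2}. In short, $\alpha_\pm$ is engineered so that the $t$-derivative of the normalized Weyl solution matches the action of $P_\pm$ at the base point $x=0$, and the Lax structure propagates this identity to all $x$.
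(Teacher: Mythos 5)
The paper does not actually prove this lemma: it is stated as a known result with a pointer to \cite{BBEIM} and \cite{GH}, so there is no in-paper argument to compare against. Your proof is the standard one and it is correct. The spatial equation is indeed trivial because $\E^{\alpha_\pm(\la,t)}$ is independent of $x$, and your Lax-pair computation $L_\pm(\pa_t\hat\psi_\pm-P_\pm\hat\psi_\pm)=\la(\pa_t\hat\psi_\pm-P_\pm\hat\psi_\pm)$ is the right mechanism. I checked the crux: with $\psi_\pm(\la,0,t)=1$, $\psi_\pm'(\la,0,t)=m_\pm$, $\psi_\pm''(\la,0,t)=p_\pm(0,t)-\la$ and $\psi_\pm'''(\la,0,t)=\pa_xp_\pm(0,t)+(p_\pm(0,t)-\la)m_\pm$, one gets $(P_\pm\psi_\pm)(\la,0,t)=2(p_\pm(0,t)+2\la)m_\pm-\pa_xp_\pm(0,t)$, which is exactly $\dot\alpha_\pm$, so the boundary values match and $c\equiv0$. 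Two small points you should make explicit: the uniqueness of the $L^2(\R_\pm)$ solution up to a scalar holds for $\la\in\C\setminus\si_\pm$ (limit point case at $\pm\infty$), with the identity extending to $\sipmul$ by continuity and to the poles $\mu_j^\pm(t)$ of $m_\pm$ by meromorphy in $\la$; and the boundedness of $\pa_t u_\pm$ and the $x$-derivatives of $u_\pm$ (needed to conclude $\pa_t\hat\psi_\pm-P_\pm\hat\psi_\pm\in L^2(\R_\pm)$) is where the finite-gap (theta-function) structure of $p_\pm(x,t)$ is quietly used. Neither is a gap, just a place to be careful.
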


Set
\begin{align}\label{Mset}
M_\pm(t) &=\{ \mu^\pm_j(t) \mid \mu^\pm_j(t) \in
(E_{2j-1}^\pm,E_{2j}^\pm) \text{ and } m_\pm(\la,t) \text{ has a
simple pole}\},\\ \nn \hat M_\pm(t) &=\{ \mu^\pm_j(t) \mid
\mu^\pm_j(t) \in \{E_{2j-1}^\pm, E_{2j}^\pm\} \},
\end{align}
and introduce the functions
\begin{align} \nn
\delta_\pm(\la,t) &:= \prod_{\mu^\pm_j(t) \in M_\pm(t)}(\la-\mu^\pm_j(t)),\\
\label{S2.6}
\hat \delta_\pm(\la,t) &:= \prod_{\mu^\pm_j(t) \in M_\pm(t)}
(\la-\mu_j^\pm(t)) \prod_{\mu^\pm_j(t) \in \hat M_\pm(t)} \sqrt{\la -
 \mu^\pm_j(t)},
\end{align}
where $\prod =1$ if the index set is empty.
These functions  allow us to remove the singularities of the Weyl solutions
$\psi_\pm(\la,x,t)$ whenever necessary.

Next, we collect now some facts from scattering theory for Schr\"odinger operators
with smooth steplike finite-gap potentials (cf.\ \cite{BET}, \cite{EGT}).
To shorten notations throughout this discussion, we omit the dependence on $t$.

Let $n_1\geq 0$ and $m_1\geq 2$ be given natural numbers and let
 $q(x)\in C^{n_1}(\mathbb R)$ be a real-valued  function such that
\beq\label{S.2}
\pm \int_0^{\pm \infty}\left|\frac{d^n}{d x^n}\big(q(x) -
p_\pm(x)\big)\right| (1+|x|^{m_1})dx <\infty,
 \quad  \forall\, 0\leq n\leq n_1.
\eeq
Consider the {\em perturbed} operator
\beq\label{S.12}
L :=- \frac{d^2}{dx^2} +q(x)
\eeq
with a potential $q(x)$ that satisfies \eqref{S.2}. The spectrum of $L$
consists of a purely absolutely continuous part
$\sigma:=\sigma_+\cup\sigma_-$, plus a finite number of discrete eigenvalues
$\sigma_d=\{\la_1,\dots,\la_p\}$ situated in the gaps
$\sigma_d\subset\R\setminus\sigma$.  The set
$\sigma^{(2)}:=\sigma_+\cap\sigma_-$ is the spectrum of multiplicity
two for the operator $L$, and the set $\sigma_+^{(1)}\cup\sigma_-^{(1)}$ with
$\sigma_\pm^{(1)}= \clos(\sigma_\pm\setminus\sigma_\mp)$ is the
spectrum of multiplicity one.

%-------------------%
The Jost solutions of the spectral equation
\beq\label{S.4}
\left(-\frac{d^2}{dx^2}+q(x)\right) \phi(x)= \la \phi(x),\quad \la\in \C,
\eeq
are defined by the requirement that they asymptotically look like the
Weyl solutions of the background operators as $x\to\pm\infty$.

\begin{lemma}\label{lemJost5}
Assume $q(x)$ satisfies \eqref{S.2}. Then there exist solutions
$\phi_\pm(\la, x)$, $\la \in \C$, of \eqref{S.4} satisfying
\beq
\phi_\pm(\la,x) = \psi_\pm(\la,x) (1 + o(1)), \qquad x\to\pm\infty.
\eeq
The Jost solutions $\phi_\pm(\la, .)$ are meromorphic with respect to $\la\in\C\backslash\si_\pm$
and have the same poles as $\psi_\pm(\la, .)$. The functions $\hat\delta(\la)\phi_\pm(\la, \cdot)$ are continuous up to the
boundary $\siu_\pm\cup\sil_\pm$. Moreover, $\hat\delta(\la)\phi_\pm(\la,\cdot)$ are $m_1$ times differentiable
with respect to $\la \in \inte(\siu_\pm\cup\sil_\pm)$ and $m_1-1$ times continuously differentiable with
respect to the local variable $\sqrt{\la - E}$ near $E\in\pa\si_\pm$.
\end{lemma}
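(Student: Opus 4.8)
The plan is to construct $\phi_\pm$ as the solution of a Volterra-type integral equation built from the two background Weyl solutions and then to read off all analytic properties from the convergence of the associated Neumann series. Concentrating on the $+$ case (the $-$ case being symmetric), I would use $\{\psi_+,\breve\psi_+\}$ as a fundamental system of $L_+y=\la y$; by \eqref{wrpsi} its Wronskian equals $g_+(\la)^{-1}$, so variation of parameters turns \eqref{S.4} into
\beq\label{voltplan}
\phi_+(\la,x) = \psi_+(\la,x) + g_+(\la)\int_x^{+\infty}\big(\breve\psi_+(\la,x)\psi_+(\la,y) - \psi_+(\la,x)\breve\psi_+(\la,y)\big)\big(q(y)-p_+(y)\big)\phi_+(\la,y)\,dy.
\eeq
Inserting the Floquet representations \eqref{1.23} and \eqref{1.30}, the exponential factors combine to $\E^{\I\theta_+(\la)(y-x)}$ and $\E^{\I\theta_+(\la)(x-y)}$, while the prefactors $u_+,\breve u_+$ are quasiperiodic, hence bounded on compact $\la$-sets away from the poles of $m_+$. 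For $\la\in\C\setminus\si_+$ one has $\Im\theta_+>0$, so every growing exponential branch is paired with a tail integral that decays at least as fast; this makes the Neumann series for \eqref{voltplan} converge and yields the asymptotics $\phi_+=\psi_+(1+o(1))$ directly from the $n=0$ part of \eqref{S.2}.

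First I would establish existence and the meromorphic structure. Dividing \eqref{voltplan} by $\psi_+(\la,x)$ (equivalently, iterating for $\phi_+/\psi_+\to1$), the successive approximations are dominated by a geometric series whose ratio is controlled by $\int_x^{+\infty}|q(y)-p_+(y)|\,dy$; this converges locally uniformly in $\la\in\C\setminus\si_+$ and shows that $\phi_+(\la,\cdot)$ inherits exactly the poles of $\psi_+(\la,\cdot)$, that is, those of $m_+(\la)$ at the Dirichlet eigenvalues $\mu_j^+$. Multiplying by $\hat\delta_+(\la)$ from \eqref{S2.6} clears the simple poles coming from $M_+$ and supplies the vanishing square-root factor at band-edge Dirichlet data from $\hat M_+$, so that $\hat\delta_+(\la)\phi_+(\la,\cdot)$ extends continuously up to $\sipmu\cup\sipml$: on the bands $\Im\theta_+=0$, the exponentials no longer decay, but the boundedness of $u_+,\breve u_+$ together with the first moment in \eqref{S.2} (recall $m_1\ge2$) keeps the iterated kernels integrable up to the boundary.

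Next I would treat differentiability in $\la$ by differentiating \eqref{voltplan} up to $m_1$ times. The only $\la$-dependence that is not already smooth comes from $\E^{\pm\I\theta_+(\la)y}$, and each application of $\frac{d}{d\la}$ brings down a factor $\I\theta_+'(\la)\,y$; by \eqref{1.25} the quasimoment derivative is bounded on the closed bands, so $k$ differentiations produce at most a polynomial weight $|y|^{k}$ in the integrand. These weights are precisely absorbed by the moments $(1+|x|^{m_1})$ in \eqref{S.2}, which is why $m_1$ derivatives are available in $\inte(\sipmu\cup\sipml)$; the differentiated iterated kernels remain dominated by the same geometric series, now with weighted integrals. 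Near a band edge $E\in\pa\si_+$ I would pass to the local variable $\sqrt{\la-E}$: by \eqref{1.0}, \eqref{1.88}, and \eqref{1.25} the quantities $Y_+^{1/2}$, $g_+$ and $\theta_+$ are smooth in this variable, while the chain rule $\frac{d}{d\la}=\frac1{2\sqrt{\la-E}}\frac{d}{d\sqrt{\la-E}}$ costs one order of regularity, leaving $m_1-1$ continuous derivatives in $\sqrt{\la-E}$.

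The hard part will be the uniform control on the bands and at the band edges simultaneously with the differentiation. Away from $\si_+$ the exponential decay of the kernel does all the work, but on $\sipmu\cup\sipml$ the decay is carried entirely by the moment weights, so one must verify that every differentiation in $\la$ is matched by an available moment and that no uncompensated polynomial growth survives in the iterated kernels. Equally delicate is checking that $\hat\delta_+$ removes exactly the singularities of $\psi_+$ — neither too few, leaving a residual pole, nor too many, creating a spurious zero — and that the square-root change of variables at $E\in\pa\si_+$ is compatible with the pole-clearing factors in $\hat\delta_+$, so that continuity and the stated $(m_1-1)$-fold differentiability persist right up to the edge. Once these uniform bounds are in place, all regularity assertions of the lemma follow from the locally uniform convergence of the (differentiated) Neumann series.
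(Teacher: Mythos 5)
Your construction is the one the paper itself uses: the same variation-of-parameters Volterra equation whose kernel is the paper's $J_\pm$ from \eqref{J}, solved by successive approximations, with $\hat\delta_\pm$ multiplied in to clear the Dirichlet singularities and with each $\la$-derivative of $\E^{\pm\I\theta_\pm(\la)(y-x)}$ paid for by one of the moments in \eqref{S.2}. Everything through the meromorphic structure, continuity up to the bands, and the $m_1$ interior derivatives is correct and matches the paper's (much terser) argument.

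The gap is in your band-edge analysis, which is exactly where the only delicate content of the lemma (the drop to $m_1-1$ derivatives in $\sqrt{\la-E}$) lies. You assert that $Y_+^{1/2}$, $g_+$ and $\theta_+$ are smooth in the local variable and blame the loss of one derivative on the chain rule $\frac{d}{d\la}=\frac{1}{2\sqrt{\la-E}}\frac{d}{d\sqrt{\la-E}}$. Both points fail. First, $g_+$ is \emph{not} smooth at a band edge: by \eqref{1.0} and \eqref{1.88} the factor $Y_+^{1/2}(\la)$ vanishes like $\sqrt{\la-E}$ there, so generically $g_+(\la)\sim C(\la-E)^{-1/2}$; equivalently, the Wronskian $\wronsk(\psi_+,\breve\psi_+)=g_+^{-1}$ of \eqref{wrpsi} vanishes at $E$ because the two Weyl solutions coalesce. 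Second, the chain rule points the wrong way: since $\la=E+w^2$ is smooth in $w=\sqrt{\la-E}$, $m$-fold differentiability in $\la$ up to the edge would yield $m$-fold differentiability in $w$ for free, so nothing is lost by the change of variables itself. The actual mechanism — the one substantive sentence in the paper's proof — is that because the Wronskian degenerates at $E$, the kernel $J_\pm(\la,x,y)$ (which as a function of $x$ is the solution vanishing at $y$ with derivative $1$ there) tends to the second, linearly growing solution at the band edge; hence near $E\in\pa\si_\pm$ the kernel can only be estimated by $C|x-y|$ rather than by a constant. That extra factor of $|x-y|$ consumes one of the $m_1$ available moments in \eqref{S.2} before any differentiation has occurred, and this is what leaves only $m_1-1$ derivatives with respect to the local variable. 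Your iteration scheme goes through once this bound is inserted, but as written your estimate of the kernel near $\pa\si_\pm$ is off by a factor of $|x-y|$ and your derivative count at the edges is unjustified.
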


\begin{proof}
Set
\beq\label{J}
J_\pm(\la,x,y)=\frac{\psi_\pm(\la,y)\breve\psi_\pm(\la,x) -
\psi_\pm(\la,x)\breve\psi_\pm(\la,y)}{\wronsk(\psi_\pm(\la),\breve\psi_\pm(\la))}.
\eeq
Then the Jost solutions of \eqref{S.4} formally satisfy the  integral equation
\beq
\phi_\pm(\la,x) = \psi_\pm(\la,x) -
\int_x^{\pm\infty} J_\pm(\la,x,y) (q(y) - p_\pm(y)) \phi_\pm(\la,y)dy.
\eeq
To remove the singularities of $\psi_\pm(\la,x)$ near $\la\in M_\pm \cup \hat M_\pm$,
one can multiply the whole equation by $\hat\delta_\pm(\la)$.

Similarly, the $x$ derivatives satisfy
\[
\frac{\pa}{\pa x} \phi_\pm(\la,x) = \frac{\pa}{\pa x} \psi_\pm(\la,x) -
\int_x^{\pm\infty} \left( \frac{\pa}{\pa x} J_\pm(\la,x,y) \right) (q(y) - p_\pm(y)) \phi_\pm(\la,y)dy.
\]
Hence, existence of the Jost solutions together with their derivatives follows from
existence of solutions of these integral equations. Existence is proved by the method of
successive iterations in the usual manner. Observe that since at points $\la\in\pa\si_\pm$
the second solution grows linearly, the above kernel can only be estimated by $C|x-y|$
near such points.
\end {proof}

We  also need to know the asymptotic behavior of the Jost solutions as $\la\to\infty$.
To determine it, we recall the well-know expansion (cf.\ the proof of Lemma~1.19 in \cite{GH})
\beq\label{exppsi}
\psi_\pm(\la,x)=\exp\left(\pm\I \sqrt{\la} x
+\int_0^x\left(\sum_{j=1}^n \frac{\kappa^\pm_j(y)}
{(\pm 2\I \sqrt{\la})^j} +\frac{\tilde\kappa_{n,\pm}(\sqrt{\la},y)}
{(\pm 2\I \sqrt{\la})^n}\right)dy\right),
\eeq
up to any order $n$, where
\beq\label{expcpsi}
\kappa^\pm_1(x)=p_\pm(x),\quad
\kappa^\pm_{j+1}(x)=- \frac{\pa}{\pa x}\kappa^\pm_j(x)-\sum_{i=1}^{j-1}
\kappa^\pm_{j-i}(x)\kappa^\pm_i(x),
\eeq
and the error term satisfies
\beq
\frac{\pa^l }{\pa k^l}\tilde\kappa_{n,\pm}(k,x)=o(1), \quad  l=0,1,\dots
\eeq
for fixed $x$ as $k\to \infty$.

\begin{lemma}\label{lemasymphi}
Assume $q(x)$ satisfies \eqref{S.2}. Then the Jost solutions have the asymptotic expansions
\begin{equation}
\phi_\pm(\la,x) =   \psi_\pm(\la,x) \left( 1 + \frac{\phi_{\pm,1}(x)}{\la^{1/2}} + \dots + \frac{\phi_{\pm,n_1+1}(x)}{\la^{(n_1+1)/2}} +
o\big(\la^{-(n_1+1)/2} \big)\right)
\end{equation}
which can be differentiated $m_1$ times with respect to $\la^{1/2}$. An analogous
expansion holds for $\frac{\pa}{\pa x} \phi_\pm(\la,x)$.
\end{lemma}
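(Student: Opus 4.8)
The plan is to derive the asymptotic expansion for $\phi_\pm(\la,x)$ by combining the integral equation from Lemma~\ref{lemJost5} with the known high-energy expansion \eqref{exppsi} of the background Weyl solutions $\psi_\pm$. Writing $\phi_\pm(\la,x) = \psi_\pm(\la,x) A_\pm(\la,x)$ for an unknown amplitude $A_\pm(\la,x)$ with $A_\pm(\la,\pm\infty)=1$, I would substitute this ansatz into the integral equation and divide through by $\psi_\pm(\la,x)$. The kernel $J_\pm(\la,x,y)$ divided by $\psi_\pm(\la,x)$ is expressible through the two Weyl solutions $\psi_\pm,\breve\psi_\pm$ and the Wronskian \eqref{wrpsi}; using the representations \eqref{1.23} and \eqref{1.30} together with the explicit form of the Wronskian $\pm g_\pm(\la,t)^{-1}$, the resulting kernel for $A_\pm$ acquires exponentially decaying factors $\E^{\mp 2\I\theta_\pm(\la)(y-x)}$ as $\la\to\infty$ along the relevant rays. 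This turns the equation for $A_\pm$ into a Volterra equation with a kernel that is uniformly small at high energy, so that $A_\pm$ admits an expansion in powers of $\la^{-1/2}$.

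Concretely, I would expand the amplitude as a Neumann series and extract the asymptotics term by term. The leading correction $\phi_{\pm,1}(x)$ comes from a single application of the integral operator to the perturbation $q-p_\pm$; using $\theta_\pm(\la)=\sqrt\la(1+O(\la^{-1}))$ and the stationary/non-stationary structure of the oscillatory integrand, each integration against $(q(y)-p_\pm(y))$ produces one additional factor of $\la^{-1/2}$, precisely as in the classical free case $p_\pm\equiv 0$. Iterating, the $k$-th iterate contributes at order $\la^{-k/2}$, and the moment/derivative hypotheses \eqref{S.2} with $m_1\geq 2$, $0\le n\le n_1$ guarantee that the integrals defining the coefficients $\phi_{\pm,j}(x)$, $1\le j\le n_1+1$, converge and that the remainder is genuinely $o(\la^{-(n_1+1)/2})$. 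The coefficients $\phi_{\pm,j}(x)$ will be given by iterated integrals of $q-p_\pm$ weighted by the background quantities $\kappa_j^\pm$ from \eqref{expcpsi}.

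For the differentiability claim, I would differentiate the Volterra equation for $A_\pm$ up to $m_1$ times with respect to the variable $\la^{1/2}$. Each $\la^{1/2}$-derivative either lands on the oscillatory exponential $\E^{\mp 2\I\theta_\pm(\la)(y-x)}$, pulling down a factor of $(y-x)$, or on the slowly varying quasiperiodic amplitudes $u_\pm,\breve u_\pm$; the former is controlled by the moment condition, since $|y-x|^{m_1}$ is absorbed by the weight $(1+|x|^{m_1})$ in \eqref{S.2}, while the latter stays bounded because the quasimoments and quasiperiodic factors are smooth in $\la^{1/2}$ away from band edges. The expansion for $\frac{\pa}{\pa x}\phi_\pm$ follows identically from the companion integral equation for $\frac{\pa}{\pa x}\phi_\pm$ already recorded in the proof of Lemma~\ref{lemJost5}, differentiating \eqref{exppsi} in $x$ to supply the analogous background expansion for $\frac{\pa}{\pa x}\psi_\pm$.

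I expect the main obstacle to be the careful bookkeeping of the oscillatory kernel at high energy: one must verify that dividing $J_\pm$ by $\psi_\pm$ genuinely yields a decaying kernel on the correct half-line, which requires tracking the sign of $\Im\theta_\pm(\la)$ and the normalization \eqref{1.8} of $g_\pm$ so that $\psi_\pm$ is the $L^2$ solution and $\breve\psi_\pm/\psi_\pm$ decays in the right direction. A secondary technical point is the uniformity of the remainder estimate $o(\la^{-(n_1+1)/2})$ and of its $\la^{1/2}$-derivatives simultaneously in $x$ on compact sets; this is where the error control in \eqref{exppsi}, namely $\frac{\pa^l}{\pa k^l}\tilde\kappa_{n,\pm}(k,x)=o(1)$, must be fed in carefully so that the background error and the perturbation error combine without loss of order.
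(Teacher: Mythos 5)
Your argument is essentially the paper's proof: the paper likewise sets $\tilde\phi_\pm=\phi_\pm/\psi_\pm$, rewrites the Volterra kernel via \eqref{1.23}, \eqref{1.30}, and \eqref{wrpsi} as $\pm g_\pm(\la)$ times a quasiperiodic amplitude plus an oscillatory factor $\E^{\pm2\I\theta_\pm(\la)(x-y)}$, and extracts the expansion by iterating and applying $n$-fold integration by parts together with the Riemann--Lebesgue lemma, with the $\la^{1/2}$-derivatives and the $\pa_x\phi_\pm$ statement handled through the differentiated integral equations exactly as you propose. The only slip is describing $\E^{\mp2\I\theta_\pm(\la)(y-x)}$ as exponentially decaying for $\la\to\infty$ along the spectrum --- it is merely oscillatory there, and the gain of $\la^{-1/2}$ per iteration comes from the Wronskian factor $g_\pm(\la)=O(\la^{-1/2})$ plus the integration by parts you invoke afterwards, so this does not affect the argument.
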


\begin{proof}
To obtain the asymptotic expansion, consider $\tilde{\phi}_\pm(\la,x)= \frac{\phi_\pm(\la,x)}{\psi_\pm(\la,x)}$,
which satisfy
\beq
\tilde{\phi}_\pm(\la,x) = 1 -
\int_x^{\pm\infty} \tilde{J}_\pm(\la,x,y) (q(y) - p_\pm(y)) \tilde{\phi}_\pm(\la,y)dy,
\eeq
where $\tilde{J}_\pm(\la,x,y) = J_\pm(\la,x,y) \frac{\psi_\pm(\la,y)}{\psi_\pm(\la,x)}$. Next recall
\eqref{1.23}, \eqref{1.30}, and \eqref{wrpsi} which imply
\[
\tilde{J}_\pm(\la,x,y) = \pm g_\pm(\la) \left( u_\pm(\la,y)^2 \frac{\breve u_\pm(\la,x)}{u_\pm(\la,x)}
\E^{\pm 2\I\theta_\pm(\la)(x-y)} - \breve u_\pm(\la,y) u_\pm(\la,y)\right),
\]
where $u_\pm(\la,x)$, $\breve u_\pm(\la,x)$ are quasi-periodic with respect to $x$ and have
convergent expansions around $\infty$ with respect to $\theta_\pm(\la)^{-1}$. Now use the fact that
\[
\int_0^\infty \E^{2 \I \theta(\la) y} f(\la,y) dy = \sum_{j=1}^n \frac{f_j}{\theta(\la)^j} + o(\theta(\la)^{-n})
\]
provided $f(\la,x)$ is $n$ times differentiable with respect to $x$, the first $n-1$ derivatives
have an asymptotic expansion with respect to $\theta(\la)^{-1}$ of order $n$ and the $n$'th
derivative satisfies $\lim_{\la\to\infty} \frac{\pa^n}{\pa x^n} f(\la,x) = g(x)$ in $L^1(0,\infty)$.
This follows from $n$ partial integrations and the Riemann-Lebesgue Lemma (cf.\
also \cite[Theorem~3.2]{MT}).

As
in the previous lemma, the claims for the derivatives follow by considering the corresponding integral equations.
\end{proof}

\begin{corollary}\label{corasymphi}
Assume $q(x)$ satisfies \eqref{S.2}. Then the Weyl $m$-functions $m_{q,\pm}(\la,x)= \frac{\phi_\pm'(\la,x)}{\phi_\pm(\la,x)}$
have the asymptotic expansion
\beq
m_{q,\pm}(\la,x) =  \pm \I \sqrt{\la} + \sum_{j=1}^{n_1} \frac{\kappa_j(x)}{(\pm 2\I\sqrt{\la})^j} + o(\la^{-n_1/2}),
\eeq
which can be differentiated $m_1$ times with respect to $\la^{1/2}$. The coefficients $\kappa_j(x)$ are
given by \eqref{expcpsi} with $q(x)$ in place of $p_\pm(x)$.
\end{corollary}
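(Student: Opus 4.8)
The plan is to get the existence and regularity of the expansion for free from Lemma~\ref{lemasymphi} and then to pin down the coefficients by a Riccati argument. Since $m_{q,\pm}(\la,x)=\pa_x\phi_\pm(\la,x)/\phi_\pm(\la,x)$ is a quotient of the two quantities whose asymptotics are provided by Lemma~\ref{lemasymphi}, and since the leading factor $\psi_\pm$ is nonvanishing while the correction factor tends to $1$ for large $|\la|$, the quotient is well defined for large $\la$ and inherits an asymptotic expansion in powers of $\la^{-1/2}$ that may be differentiated $m_1$ times with respect to $\la^{1/2}$. The only subtle point here is the order of the remainder: although $\phi_\pm$ and $\pa_x\phi_\pm$ are controlled to order $n_1+1$, the logarithmic derivative carries the leading factor $\pa_x\psi_\pm/\psi_\pm=\pm\I\sqrt{\la}+O(\la^{-1/2})$, and multiplying the ratio of the two series by this $\sqrt{\la}$ shifts the error from $o(\la^{-(n_1+1)/2})$ to $o(\la^{-n_1/2})$, which is exactly the stated accuracy.

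It remains to identify the coefficients. For this I would not split into background and correction terms (which would force one to recombine the $\kappa_j^\pm$ built from $p_\pm$ with the contribution of $\phi_\pm/\psi_\pm$), but instead use that $\phi_\pm$ solves \eqref{S.4}. Writing $m=m_{q,\pm}$, the identity $\pa_x\phi_\pm=m\,\phi_\pm$ together with $\pa_x^2\phi_\pm=(q-\la)\phi_\pm$ yields the Riccati equation $\pa_x m + m^2 = q-\la$. Substituting the ansatz $m=\pm\I\sqrt{\la}+\sum_{j\ge1}a_j(x)(\pm2\I\sqrt{\la})^{-j}$ and comparing coefficients of $(\pm2\I\sqrt{\la})^{-k}$, the $\la$ terms cancel, the constant term gives $a_1=q$, the coefficient of $(\pm2\I\sqrt{\la})^{-1}$ gives $a_2=-\pa_x a_1$, and for $k\ge2$ one obtains $a_{k+1}=-\pa_x a_k-\sum_{i=1}^{k-1}a_i a_{k-i}$. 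This is precisely the recursion \eqref{expcpsi} with $q$ in place of $p_\pm$, so $a_j=\kappa_j$ and the coefficients are as claimed. Since the asymptotic coefficients of a given expansion are unique, this matching is legitimate once existence is known from the previous step.

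The main obstacle is the first step rather than the algebra: one has to justify that dividing the two asymptotic expansions of Lemma~\ref{lemasymphi} again produces a genuine asymptotic expansion with a uniform $o(\la^{-n_1/2})$ remainder, and that this survives $m_1$ differentiations in $\la^{1/2}$. This requires knowing that the denominator $\phi_\pm$ is bounded away from zero in the relevant region for large $\la$ (which follows from $\phi_\pm/\psi_\pm\to1$) and carefully tracking the half-order lost to the factor $\sqrt{\la}$; the identification of the coefficients via the Riccati recursion is then a routine formal computation.
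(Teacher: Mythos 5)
Your proposal is correct and follows essentially the same route as the paper: existence and $m_1$-fold differentiability of the expansion are read off from Lemma~\ref{lemasymphi}, and the coefficients are identified by substituting the ansatz into the Riccati equation $\pa_x m_{q,\pm}+m_{q,\pm}^2+\la-q=0$ and matching with the recursion \eqref{expcpsi}. The extra care you take with the half-order loss in the remainder and the nonvanishing of the denominator is a reasonable elaboration of what the paper leaves implicit, not a different argument.
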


\begin{proof}
Existence of the expansion follows from the previous lemma, and the expansion coefficients follow
by comparing coefficients in the Riccati equation
\[
\frac{\pa}{\pa x} m_{q,\pm}(\la,x) + m_{q,\pm}(\la,x)^2 + \la - q(x) =0.
\]
\end{proof}

The Jost solutions can be represented, with the help of the transformation operators, as
\beq\label{S2.2}
\phi_\pm(\la,x) =\psi_\pm(\la,x)\pm\int_{x}^{\pm\infty} K_\pm(x,y)
\psi_\pm(\la,y) dy,
\eeq
where $K_\pm(x,y)$ are real-valued functions that satisfy
\beq\label{A.5}
 K_\pm(x,x)=\pm\frac{1}{2}\int_x^{\pm\infty} (q(y)-p_\pm(y))dy.
\eeq
Moreover, as a consequence of \cite[(A.15)]{BET}, we have the estimate
\beq\label{S2.3}
\left|\frac{\pa^{n+l}}{\pa x^n\pa
y^l} K_\pm(x,y)\right|\leq C_\pm(x)\left(Q_\pm(x+y)
+\sum_{j=0}^{n+l-1} \left|\frac{\pa^j}{\pa x^j}\big(q(\frac{x+y}{2})
-p_\pm(\frac{x+y}{2})\big)\right|\right),
\eeq
for $\pm y>\pm x$, where $C_\pm(x)=C_{n,l,\pm}(x)$ are continuous positive
functions decaying as $x\to\pm\infty$, and
\beq\label{S2.333}
Q_\pm(x):=
\pm\int_{\frac{x}{2}}^{\pm\infty} \big|q(y) - p_\pm(y)\big| dy.
\eeq
Formula \eqref{S2.2} shows that the Jost solutions inherit all
singularities of the background  Weyl functions $m_\pm(\la)$ and
Weyl solutions $\psi_\pm(\la)$. In particular, as a direct
consequence of formulas \eqref{psin}, \eqref{1.25}, \eqref{1.29}, \eqref{ischez},
\eqref{S2.2}, and Lemma~\ref{lemJost5}
we have the following result.

\begin{lemma}\label{lemMhat}
Let $E\in\pa\si_\pm$ and let $\varepsilon>0$ be such that
$[E-\varepsilon, E+\varepsilon]\cap\pa\si_\pm=\{E\}$ and
$\varepsilon <\dist(\mu_j^\pm,E)$ if $\mu_j^\pm\neq E$.

\noindent
{\bf(i)} Let $\mu_j^\pm= E$.
Introduce the functions
\beq\label{phige}
\phi_{\pm,E}(\la,x):=
\I\left(\theta_\pm(\la)-\theta_\pm(E)\right)\phi_\pm(\la,x),\
g_{\pm,E}(\la):=
\left|\theta_\pm(\la)-\theta_\pm(E)\right|^{-2}g_\pm(\la)
\eeq
for $\la [E-\varepsilon, E+\varepsilon]$. The functions $\phi_{\pm,E}(\la,x)$ admit
the representation
\beq\label{struct3}
\phi_{\pm,E}(\la,x)=c_{\pm,E}(\la,x) +\I
\left(\theta_\pm(\la)-\theta_\pm(E)\right)s_{\pm,E}(\la,x),
\eeq
where $c_{\pm,E}(\cdot,x), s_{\pm,E}(\cdot,x)\in C^{m_0-1}([E-\varepsilon, E+\varepsilon])$
and $c_{\pm,E}(\cdot,x), s_{\pm,E}(\cdot,x)\in \mathbb R$.
Analogous representations hold for $\frac{\pa}{\pa x} \phi_{\pm,E}(\la,x)$.
Moreover,
\beq\label{realval}
\phi_{\pm,E}(\la,x)\in\mathbb{R},\qquad \la\in
[E-\varepsilon, E+\varepsilon]\setminus\si_\pm,
\eeq
and 
\beq\label{xiwronsk}
g_{\pm,E}(\la)^{-1}=\pm\wronsk(\phi_{\pm,E},
\overline{\phi_{\pm,E}}), \qquad \la\in (E-\varepsilon,
E+\varepsilon)\cap\si_\pm.
\eeq
{\bf (ii)} Let $\mu_j^\pm\neq E$.
The functions $\phi_\pm(\la,x)$  admit the same representation
\eqref{struct3} on the set $[E-\varepsilon/2, E+\varepsilon/2]$.
\end{lemma}

Next, recalling \eqref{S2.6}, set
\beq\label{S2.12}
\tilde\phi_\pm(\la,x)=\delta_\pm(\la) \phi_\pm(\la,x)
\eeq
so that the functions $\tilde\phi_\pm(\la,x)$ have no poles in the interior
of the gaps of the spectrum $\si$. For each eigenvalue $\la_k$, we
introduce the corresponding norming constants
\beq \label{S2.14}
\left(\gamma_k^\pm\right)^{-2}=\int_{\R} \tilde\phi_\pm^2(\la_k,x)
dx.
\eeq
Furthermore, recall the scattering relations
\beq\label{S2.16}
T_\mp(\la) \phi_\pm(\la,x)
=\overline{\phi_\mp(\la,x)} + R_\mp(\la)\phi_\mp(\la,x),
\quad\la\in\simpul,
\eeq
where the transmission and reflection
coefficients are defined as usual, by
\beq\label{2.17} T_\pm(\la):=
\frac{\wronsk(\overline{\phi_\pm(\la)},
\phi_\pm(\la))}{\wronsk(\phi_\mp(\la), \phi_\pm(\la))},\qquad
R_\pm(\la):= -
\frac{\wronsk(\phi_\mp(\la),\overline{\phi_\pm(\la)})}
{\wronsk(\phi_\mp(\la), \phi_\pm(\la))}, \quad\la\in \sipmul.
\eeq

%-------------------%
\begin{lemma}\label{lem2.3}
Suppose that $q(x)\in C^{n_1}(\mathbb R)$ satisfies \eqref{S.2}. Then the scattering data
\begin{align}\nn
{\mathcal S} = \Big\{ & R_+(\la),\;T_+(\la),\;
\la\in\sigma_+^{\mathrm{u,l}}; \; R_-(\la),\;T_-(\la),\;
\la\in\sigma_-^{\mathrm{u,l}};\\\label{S4.6} &
\la_1,\dots,\la_p\in\R\setminus \sigma,\;
\gamma_1^\pm,\dots,\gamma_p^\pm\in\R_+\Big\}
\end{align}
has the following properties:
\begin{enumerate}[\bf I.]
\item
\begin{enumerate}[\bf(a)]
\item
$T_\pm(\lau) =\overline{T_\pm(\lal)}$ for $\la\in\sigma_\pm$.\\
$R_\pm(\lau) =\overline{R_\pm(\lal)}$ for $\la\in\sigma_\pm$.
\item
$\dfrac{T_\pm(\la)}{\overline{T_\pm(\la)}}= R_\pm(\la)$
for $\la\in\sigma_\pm^{(1)}$.
\item
$1 - |R_\pm(\la)|^2 =
\dfrac{g_\pm(\la)}{g_\mp(\la)} |T_\pm(\la)|^2$ for
$\la\in\sigma^{(2)}$.
\item
$\overline{R_\pm(\la)}T_\pm(\la) +
R_\mp(\la)\overline{T_\pm(\la)}=0$ for
$\la\in\sigma^{(2)}$.
\item
$T_\pm(\la) = 1 + O\Big(\frac{1}{\sqrt\la}\Big)$ for
$\la\to\infty$.
\item
$R_\pm(\la) = o\Big(\frac{1}{\left(\sqrt{\la}\right)^{n_1+1}}\Big)$ for
$\la\to\infty$.
\end{enumerate}
\item
The functions $T_\pm(\la)$ can be extended as meromorphic
functions to the domain $\C \setminus \sigma$ and satisfy
\beq\label{S2.18}
\frac{1}{T_+(\la) g_+(\la)} = \frac{1}{T_-(\la) g_-(\la)}=:-W(\la),
\eeq
where $W(\la)$ possesses the following properties:
\begin{enumerate}[\bf(a)]
\item
The function $\tilde W(\la)=\delta_+(\la)\delta_-(\la) W(\la)$ is
holomorphic in the domain $\C\setminus\sigma$, with simple
zeros at the points $\la_k$, where
\beq\label{S2.11}
\frac{d\tilde W}{d \la}(\la_k) = (\gamma_k^+\gamma_k^-)^{-1}.
\eeq
In addition, it satisfies
\beq\label{S2.9}
\overline{\tilde W(\lau)}=\tilde W(\lal), \quad
\la\in\sigma\quad \text{and}\quad \tilde W(\la)\in\R
\quad \text{for} \quad \la\in\R\setminus \sigma.
\eeq
\item
The function $\hat W(\la) = \hat\delta_+(\la) \hat\delta_-(\la)
W(\la)$ is continuous on the set $\C\setminus\sigma$ up to the
boundary $\siu\cup\sil$. Moreover, this function is $m_1-1$ times
differentiable with respect to $\la$ on the set
$\left(\siu\cup\sil\right)\setminus (\pa\si_-\cup\pa\si_+)$ and
$m_1-1$ times continuously differentiable with respect to the local
variable $\sqrt{\la - E}$ for $E\in\pa\si_-\cup\pa\si_+$. It can have
zeros on the set $\pa\sigma_-\cap\pa\si_+$ and does not vanish
at the other points of the set $\sigma$.  If $\hat W(E)=0$ for
$E\in\pa\sigma_-\cap\pa\si_+$, then $\hat W(\la) = \sqrt{\la -E}
(C(E)+o(1))$, $C(E)\ne 0$.
\end{enumerate}
\item
\begin{enumerate}[\bf(a)]
\item
The reflection coefficients $R_\pm(\la)$ are continuous functions on $\siu_\pm\cup\sil_\pm$.
They are also $m_1$ times differentiable with respect to $\la$ on the sets
$\sipmul\setminus\{\pa\si_+\cup\pa\si_-\}$ and
$m_1-j$ times differentiable with respect to the coordinate $\sqrt{\la-E}$ with
$E\in\{\pa\si_+\cup\pa\si_-\}\cap\sipmul$, where $j=1$ if $\hat W(E)\neq 0$ and
$j=2$ if $\hat W(E)\neq 0$.
The asymptotics {\bf I.~(f)} hold for all derivatives as well.
\item
If $E\in\pa\si_\pm$ and $\hat W(E)\neq 0$, then
\beq\label{P.1}
R_\pm(E)=
\begin{cases}
-1 &\text{for } E\notin\hat M_\pm,\\
1 &\text{for } E\in\hat M_\pm.
\end{cases}
\eeq
\end{enumerate}
\end{enumerate}
\end{lemma}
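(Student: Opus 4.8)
The plan is to extract every assertion from the Wronskian representations \eqref{2.17} of $T_\pm,R_\pm$ and the scattering relation \eqref{S2.16}, using two structural facts about the Jost solutions. First, since $q$ is real and complex conjugation interchanges the two rims of the cut, $\overline{\phi_\pm(\lau,x)}=\phi_\pm(\lal,x)$ for $\la\in\si_\pm$; via \eqref{S2.2} this is inherited from the same symmetry of the background Weyl solutions. Second, on $\si_\pm$ the conjugate $\overline{\psi_\pm}$ coincides with the second Weyl solution $\breve\psi_\pm$, because the coefficients of $H_\pm$ in \eqref{1.29} are real while $Y_\pm^{1/2}$ is purely imaginary there, so $\overline{m_\pm}=\breve m_\pm$; letting $x\to\pm\infty$ in the (constant) Wronskian and invoking \eqref{wrpsi} then gives $\wronsk(\phi_\pm,\overline{\phi_\pm})=\pm g_\pm^{-1}$ on $\si_\pm$. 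These two identities reduce all of Part~I to algebra.

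Indeed, Part~I(a) is immediate from the conjugation symmetry together with the fact that conjugating a Wronskian conjugates its entries. For (b), on $\si_\pm^{(1)}$ the point lies in a gap of $\si_\mp$, so $\phi_\mp(\la,\cdot)$ is the real decaying solution, $\overline{\phi_\mp}=\phi_\mp$; substituting this into $\overline{T_\pm}$ and using bilinearity and antisymmetry of $\wronsk$ collapses $T_\pm/\overline{T_\pm}$ to $R_\pm$. On $\si^{(2)}$ both $\{\phi_\pm,\overline{\phi_\pm}\}$ are bases of solutions, so expressing the scattering relation \eqref{S2.16} in each basis and equating the resulting Wronskians, now evaluated with $\wronsk(\phi_\pm,\overline{\phi_\pm})=\pm g_\pm^{-1}$, produces the unitarity-type identities (c) and (d). Finally (e) and (f) are large-$\la$ statements: one inserts the asymptotic expansions of Lemma~\ref{lemasymphi} (equivalently Corollary~\ref{corasymphi}) into the Wronskians, whereupon the leading orders give $T_\pm=1+O(\la^{-1/2})$, while in $R_\pm=-\wronsk(\phi_\mp,\overline{\phi_\pm})/\wronsk(\phi_\mp,\phi_\pm)$ the expansion coefficients of the numerator cancel up to order $n_1+1$, leaving the decay $o(\la^{-(n_1+1)/2})$. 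This cancellation is exactly where the $n_1$ derivatives and $m_1$ moments of \eqref{S.2} are consumed.

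For Part~II one identifies the common value in \eqref{S2.18} as $W=\wronsk(\phi_-,\phi_+)$ (the two expressions $1/(T_\pm g_\pm)=\mp\wronsk(\phi_\mp,\phi_\pm)$ agree by antisymmetry), so $W$ extends meromorphically to $\C\setminus\si$ along with the $\phi_\pm$. Multiplying by $\delta_+\delta_-$ clears the Dirichlet poles, giving $\tilde W=\wronsk(\tilde\phi_-,\tilde\phi_+)$, holomorphic on $\C\setminus\si$; its zeros are precisely the points $\la_k$ where $\phi_+,\phi_-$ become linearly dependent, i.e.\ eigenvalues, and both simplicity and the normalization \eqref{S2.11} follow from the standard identity $\frac{d}{d\la}\wronsk(\tilde\phi_-,\tilde\phi_+)=\int_{\R}\tilde\phi_-\tilde\phi_+\,dx$ evaluated at $\la_k$ together with the definition \eqref{S2.14} of $\gamma_k^\pm$, while \eqref{S2.9} is again the conjugation symmetry. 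For $\hat W$ one instead multiplies by $\hat\delta_+\hat\delta_-$, so that Lemma~\ref{lemJost5} supplies continuity up to $\siu\cup\sil$ and the claimed differentiability; the only delicate point is the set $\pa\si_-\cap\pa\si_+$, where the local $\sqrt{\la-E}$ structure of Lemma~\ref{lemMhat} shows $\hat W$ can vanish, but only like $\sqrt{\la-E}\,(C(E)+o(1))$.

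Part~III then reads off the reflection coefficient. Writing $R_\pm$ as the ratio of the $\hat\delta$-regularized Wronskian $\hat\delta_+\hat\delta_-\wronsk(\phi_\mp,\overline{\phi_\pm})$ over $\hat W$, the numerator is continuous and $m_1$ times differentiable by Lemma~\ref{lemJost5} and the denominator is nonvanishing off $\pa\si_-\cap\pa\si_+$, which yields the continuity and the differentiability counts of (a), with one additional derivative lost exactly where $\hat W$ has a square-root zero; the asymptotics (f) propagate to the derivatives from the differentiated expansions. The band-edge values \eqref{P.1} follow because at $E\in\pa\si_\pm$ the two boundary values $\phi_\pm(\lau),\phi_\pm(\lal)$ merge, forcing $|R_\pm(E)|=1$; using the regularization of Lemma~\ref{lemMhat}, the factor $\I(\theta_\pm(\la)-\theta_\pm(E))$ is odd across the cut, so the sign flips from $-1$ to $+1$ precisely when a Dirichlet divisor sits at the edge, i.e.\ $E\in\hat M_\pm$. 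I expect the genuine obstacle to be this boundary analysis at the common edges $\pa\si_-\cap\pa\si_+$: showing that $\hat W$ degenerates no worse than a simple square-root zero and tracking the consequent loss of one derivative for $R_\pm$ forces one to work throughout in the local coordinate $\sqrt{\la-E}$ and to lean on the fine structure of Lemma~\ref{lemMhat}, whereas the interior identities of Part~I are essentially formal.
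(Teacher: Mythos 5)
Your proposal is correct and follows essentially the same route as the paper: the bulk of the lemma (Parts I(a)--(e), II, III) is the standard Wronskian/conjugation-symmetry argument imported from the earlier paper \cite{EGT}, and the genuinely new item I.(f) (with its derivative version in III.(a)) is obtained, exactly as you indicate, from the asymptotic expansions of Lemma~\ref{lemasymphi} and Corollary~\ref{corasymphi}. The only detail the paper adds that you leave implicit is how the cancellation in the numerator is made transparent, namely via the factorization $\wronsk(\phi_\pm,\ov{\phi_\mp}) = \phi_-\,\ov{\phi_+}\,\bigl( m_{q,-}-\ov{m_{q,+}}\bigr)$, where the two Weyl $m$-functions of the \emph{same} potential $q$ share all expansion coefficients, so their difference is $o(\la^{-n_1/2})$.
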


\begin{proof}
Except for {\bf I.~(f)} and the corresponding statement for the derivatives in {\bf III.~(a)} everything follows
as in \cite[Lemma~4.1]{EGT}. To prove the missing items, we need to prove that $\wronsk(\phi_\pm,\ov{\phi_\mp})=o(\la^{-n_1/2})$
and all its necessary derivatives with respect to $\sqrt\la$. But this follows from
\[
\wronsk(\phi_\pm,\ov{\phi_\mp}) = \phi_-(\la,x) \ov{\phi_+(\la,x)} ( m_{q,-}(\la,x) - \ov{m_{q,+}(\la,x)})
\]
since $\phi_-(\la,x) \ov{\phi_+(\la,x)}=O(1)$ by Lemma~\ref{lemasymphi} and $m_{q,-}(\la,x) - \ov{m_{q,+}(\la,x)}=o(\la^{-n_1/2})$
by Corollary~\ref{corasymphi}.
\end{proof}

Next, recall the associated Gelfand--Levitan--Marchenko (GLM) equations
\beq\label{ME}
K_\pm(x,y) + F_\pm(x,y) \pm \int_x^{\pm\infty} K_\pm(x,\xi)
F_\pm(\xi,y)d\xi =0, \quad \pm y>\pm x,
\eeq
where\footnote{Here we have used the notation
$\oint_{\sigma_\pm}f(\la)d\la := \int_{\sipmu} f(\la)d\la -
\int_{\sipml} f(\la)d\la$.}
\begin{align}\label{4.2}
F_\pm(x,y) &= \frac{1}{2\pi\I}\oint_{\sigma_\pm}
R_\pm(\la) \psi_\pm(\la,x) \psi_\pm(\la,y)
g_\pm(\la)d\la + \\ \nn &\quad + \frac{1}{2\pi
\I}\int_{\sigma_\mp^{(1),\mathrm{u}}} |T_\mp(\la)|^2
\psi_\pm(\la,x) \psi_\pm(\la,y)g_\mp(\la)d\la\\ \nn
&\quad + \sum_{k=1}^p (\gamma_k^\pm)^2 \tilde\psi_\pm(\la_k,x)
\tilde\psi_\pm(\la_k,y).
\end{align}
As in \cite[Lemma~4.2]{EGT}, we have the following result:

\begin{lemma}
Under the same assumptions as in Lemma~\ref{lem2.3}, the functions $F_\pm(x,y)$ satisfy
\begin{enumerate}[\bf I.]
\addtocounter{enumi}{3}
\item
$F_\pm(x,y)\in C^{(n_1+1)}(\R^2)$. There exist real-valued continuous functions $\tilde q_{\pm}(x)$, with
$x^{m_1}\tilde q_{\pm}\in L^1(\mathbb R_\pm)$, and monotone positive continuous functions
$C_\pm(x)$, $Q_\pm(x)$, which decay as $x\to\pm\infty$, with $x^{m_1-1}Q_\pm(x)\in L^1(\mathbb R_\pm)$,
such that  for $\pm x>\pm a$, $\pm y >\pm a$ and $0\leq n+l\leq n_1+1$  the inequalities hold
\beq \label{4.4}
\left|\frac{\pa^{n+l}}{\pa x^n\pa y^l}
F_\pm (x,y)\right|\leq C_\pm(a) \left( Q_\pm(x+y) +\tilde q_\pm(x+y)(1-\delta_{n+l,0})\right).
\eeq
Here $\delta_{n,m}$ is the Kronecker delta and $a\in\mathbb R$ is an arbitrary fixed number.
Moreover,
\beq \label{4.5}
\pm\int_0^{\pm\infty} \left|\frac{d^{n}}{dx^{n}} F_\pm
(x,x)\right|(1+ |x|^{m_1}) dx <\infty,\qquad 1\leq n\leq n_1+1.
\eeq
\end{enumerate}
\end{lemma}

\begin{proof}
The GLM equation \eqref{ME} and \eqref{4.2} are derived in \cite{BET}.
Estimate \eqref{4.4} follows directly from \eqref{ME} and \eqref{S2.3}.
Equation \eqref{4.5} is then immediate from \eqref{4.4}.
\end{proof}

As demonstrated in \cite{BET} and \cite{EGT}, 
properties \textbf{I--IV} are necessary and sufficient for a
set $\mathcal{S}$ to be the set of scattering data for operator $L$
with a potential $q(x)$ satisfy \eqref{S.2}.

Now the procedure of solving of the inverse scattering problem is as follows:

Let $L_\pm$ be two one-dimensional finite-gap Schr\"odinger
operators associated with the potentials $p_\pm(x)$. Let
$\mathcal{S}$ be given data as in \eqref{S4.6} satisfying
\textbf{I--IV}. Define corresponding kernels $F_\pm(x,y)$ via
\eqref{4.2}. As shown in \cite{BET}, under  condition
\textbf{IV} the GLM equations \eqref{ME} have unique smooth
real-valued solutions $K_\pm(x,y)$, that satisfy estimates of type
\eqref{4.4}. In particular,
\beq\label{5.101}
\pm\int_0^{\pm\infty} (1+|x|^{m_1})\left|\frac{d^n}{dx^n}
K_\pm(x,x)\right|
dx<\infty, \qquad 1\leq n\leq n_1+1.
\eeq
Now introduce the functions
\beq\label{5.1}
q_\pm(x) = p_\pm(x) \mp 2\frac{d}{dx}K_\pm(x,x)
\eeq
and note that  \eqref{5.101} reads
\beq\label{5.2}
\pm \int_0^{\pm \infty}\left| \frac{d^n}{dx^n} \big( q_\pm(x) -
p_\pm(x)\big) \right| (1+|x|^{m_1}) d x <\infty ,\quad 0\leq n\leq n_1.
\eeq
We obtain the following result.

\begin{theorem}[\cite{BET}]\label{theor4}
Let the set of data ${\mathcal S}$, defined as in \eqref{S4.6},
satisfy  properties \textbf{I--IV}. Then the functions $q_\pm(x)$ defined by \eqref{5.1}
satisfy \eqref{5.2} and coincide, $q_-(x)\equiv q_+(x)=:q(x)$. Moreover, the set ${\mathcal S}$
is the set of scattering data for the Schr\"odinger operator \eqref{S.12} with
potential $q(x)$ satisfying \eqref{S.2}.
\end{theorem}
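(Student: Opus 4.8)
The plan is to carry out the three classical stages of the inverse scattering transform, adapted to the steplike finite-gap background.

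First I would establish unique solvability of the GLM equations \eqref{ME}. By property \textbf{IV} the kernels $F_\pm(x,y)$ are $(n_1+1)$ times differentiable and decay like $Q_\pm(x+y)$ by \eqref{4.4}; hence for each fixed $x$ the operator $\mathcal{F}_{\pm,x}\colon f\mapsto \pm\int_x^{\pm\infty} F_\pm(\cdot,\xi)f(\xi)\,d\xi$ is compact on $L^2(x,\pm\infty)$. By the Fredholm alternative, unique solvability reduces to injectivity of $1\pm\mathcal{F}_{\pm,x}$, i.e.\ to showing the homogeneous equation has only the trivial solution. I expect this to be the main analytic obstacle. I would treat it by verifying that the quadratic form $\langle(1\pm\mathcal{F}_{\pm,x})h,h\rangle$ is positive definite: substituting \eqref{4.2} and grouping the three summands, the reflection part $\frac{1}{2\pi\I}\oint_{\sigma_\pm}R_\pm\,\psi_\pm\psi_\pm\,g_\pm\,d\la$, the steplike part over $\sigma_\mp^{(1)}$, and the discrete sum, one rewrites the form as an integral of $|\hat h|^2$ against manifestly nonnegative weights. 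The positivity here is exactly what is encoded in Lemma~\ref{lem2.3}: the unitarity relation $1-|R_\pm(\la)|^2=\frac{g_\pm(\la)}{g_\mp(\la)}|T_\pm(\la)|^2>0$ on $\sigma^{(2)}$, the relations \textbf{I.~(a),(b),(d)} that make the contour combination over $\siu_\pm,\sil_\pm$ real and nonnegative, and the positivity $\gamma_k^\pm>0$ of the norming constants. Each of the three pieces is then nonnegative, forcing $h=0$.

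Second, granting solvability, I would derive the decay and moment estimates \eqref{5.2}. These follow from \eqref{5.101}, obtained by propagating the decay of $F_\pm$ through the resolvent $(1\pm\mathcal{F}_{\pm,x})^{-1}$. Iterating \eqref{ME} and using \eqref{4.4} together with the moment bound \eqref{4.5}, one shows $K_\pm(x,y)$ inherits the estimate $|K_\pm(x,y)|\le C\,Q_\pm(x+y)$ and the $m_1$-moment control of the diagonal $K_\pm(x,x)$. Differentiating the definition \eqref{5.1} then yields \eqref{5.2}. This step is technical but routine once the operator norm of $(1\pm\mathcal{F}_{\pm,x})^{-1}$ is bounded uniformly as $x\to\pm\infty$, which itself follows from the uniform positivity lower bound obtained in the first step.

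Third, I would prove the coincidence $q_-\equiv q_+$ and verify that $\mathcal{S}$ is the scattering data of the resulting operator. Define the Jost solutions $\phi_\pm$ from the reconstructed $K_\pm$ via \eqref{S2.2}; by construction $\phi_\pm$ solves $-y''+q_\pm y=\la y$ with the correct Weyl asymptotics as $x\to\pm\infty$, and $q_\pm$ is determined on all of $\R$. The crucial point is that $\phi_+$ and $\phi_-$ solve the \emph{same} equation. To see this I would use the GLM identity to re-derive the scattering relation \eqref{S2.16} and the transmission identity \eqref{S2.18}, and then invoke the consistency relations \textbf{I.~(c),(d)} on the multiplicity-two set $\sigma^{(2)}$ together with the holomorphy and symmetry \eqref{S2.9} of $\tilde W(\la)$ from property \textbf{II}. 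These force the Wronskian structure linking $\phi_+$ and $\phi_-$ to be that of a single operator, and uniqueness of the potential from its Weyl function then gives $q_-\equiv q_+=:q$. Finally, reading off the reflection coefficients, transmission coefficients and norming constants of $L=-\pa_x^2+q$ through the defining formulas \eqref{2.17} and \eqref{S2.14}, and comparing with $\mathcal{S}$, confirms that $\mathcal{S}$ is precisely the scattering data of $L$ from the class \eqref{S.2}. The hard part will be the positivity argument for injectivity in the first stage, since in the steplike case the extra contribution over $\sigma_\mp^{(1)}$ and the possible partial overlap of the bands of $\sigma_\pm$ mean that one must exploit the full collection of relations in property \textbf{I}, rather than the single unitarity relation that suffices classically.
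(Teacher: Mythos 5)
The paper gives no proof of this theorem: it is imported verbatim from the reference \cite{BET}, and the surrounding text merely records its conclusions (``As it shown in \cite{BET}, under condition \textbf{IV} the GLM equations \eqref{ME} have unique smooth real-valued solutions\dots''). So there is no in-paper argument to compare against; what can be judged is whether your outline would reconstruct the proof in \cite{BET}. Your three-stage plan is the standard route, and stages one and two are right in spirit, but one step of stage one would fail as written: the quadratic form $\langle(1\pm\mathcal{F}_{\pm,x})h,h\rangle$ is \emph{not} manifestly positive definite. By property \textbf{I.(b)}, $R_\pm=T_\pm/\overline{T_\pm}$ on the multiplicity-one set $\sigma_\pm^{(1)}$, so $|R_\pm|=1$ there and the effective weight $1-|R_\pm|^2$ vanishes identically on those bands. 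Positivity therefore only yields $\hat h=0$ on $\sigma^{(2)}$, on $\sigma_\mp^{(1)}$ (through the $|T_\mp|^2 g_\mp$ term), and at the eigenvalues; to conclude $h=0$ one must additionally use that $\hat h(\la)=\int h(y)\psi_\pm(\la,y)\,dy$ extends analytically off the spectrum and run a separate uniqueness argument. This degeneracy on $\sigma_\pm^{(1)}$ is precisely the new feature of the steplike case, so it cannot be waved away as ``manifestly nonnegative weights forcing $h=0$.''

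The more serious gap is in stage three. The coincidence $q_-\equiv q_+$ is the actual content of the theorem, and your appeal to ``uniqueness of the potential from its Weyl function'' assumes what has to be proved, namely that the two half-line reconstructions glue into a single full-line operator whose scattering data are the given $\mathcal S$. In \cite{BET} this is done by showing that the function $T(\la)\phi_+(\la,x)$ (equivalently $\phi_+/W$), built from the right-hand GLM solution alone, is meromorphic in $\C\setminus\sigma$, has the boundary values, jumps and symmetries dictated by properties \textbf{I}--\textbf{II}, and hence --- by a Liouville-type argument --- must coincide with the combination $\overline{\phi_-}+R_-\phi_-$ constructed from the left-hand GLM solution; only after this identity is in place can one conclude $q_-=q_+$ and then verify that the reflection and transmission coefficients and norming constants of $L=-\pa_x^2+q$ reproduce $\mathcal S$. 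Your sketch names the right ingredients (the relations \textbf{I.(c),(d)}, the function $\tilde W$ and \eqref{S2.9}) but does not supply the analytic-continuation mechanism that actually forces the two potentials to agree, which is where essentially all of the work in \cite{BET} lies.
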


Our next step is to describe a formal scheme for using the inverse scattering method to solve the
initial value problem for the KdV equation with initial conditions $q(x)$
satisfying \eqref{2.111} with with some quasi-periodic, finite-gap potentials $p_\pm(x)$ and
fixed $m_0\geq 8$ and $n_0\geq m_0+5$. Consider the
corresponding scattering data $\mathcal{S}=\mathcal{S}(0)$ which
obey conditions \textbf{I--IV} with $n_1=n_0$ and
$m_1=m_0$. Let $p_\pm(x,t)$ be the finite-gap solution of the KdV
equation with initial conditions $p_\pm(x)$ and let $m_\pm(\la,t)$,
$\breve m_\pm(\la,t)$ $\psi_\pm(\la,x,t)$,  $\alpha_\pm(\la,t)$ be
defined by \eqref{1.29}, \eqref{psin} and \eqref{1.38} as above. Also set
\beq\label{1.38new}
\breve\alpha_\pm(\la,t)=\int_0^t
\left(2(p_\pm(0,s) + 2\la) \breve m_\pm(\la,s) - \frac{\pa
p_\pm(0,s)}{\pa x}\right)ds.
\eeq
Introduce the set  $\mathcal S(t)$ by
\begin{align}\nn
{\mathcal S}(t) = \Big\{ & R_+(\la,t),\;T_+(\la,t),\;
\la\in\sigma_+^{\mathrm{u,l}}; \;
R_-(\la,t),\;T_-(\la,t),\;
\la\in\sigma_-^{\mathrm{u,l}};\\\label{S4.6t} &
\la_1,\dots,\la_p\in\R\setminus \sigma,\;
\gamma_1^\pm(t),\dots,\gamma_p^\pm(t)\in\R_+\Big\},
\end{align}
 where $\la_k(t)$, $R_\pm(\la,t)$, $T_\pm(\la,t)$ and
$\gamma_k^\pm(t)$ are defined (\cite[Lemma~5.3]{EGT}) by:
\begin{align}
\label{refl} R_\pm(\la,t) &= R_\pm(\la,0)\E^{\alpha_\pm(\la,t)
-\breve{\alpha}_\pm(\la,t)}, \quad \la\in\si_\pm, \\ \label{trans}
T_\mp(\la,t) &= T_\mp(\la,0)\E^{\alpha_\pm(\la,t)
-\breve{\alpha}_\mp(\la,t)},\quad\la\in\mathbb C,\\ \label{norm}
\left(\gamma_k^\pm(t)\right)^2 &= \left(\gamma_k^\pm(0)\right)^2
\frac{\delta_\pm^2(\la_k,0)}{\delta_\pm^2(\la_k,t)}
\E^{2\alpha_\pm(\la_k,t)},
\end{align}
where $\alpha_\pm(\la,t)$, $\breve{\alpha}_\pm(\la,t)$,
$\delta_\pm(\la,t)$ are defined in \eqref{1.38}, \eqref{1.38new},
\eqref{S2.6} respectively.

In \cite{EGT}, it is proved, that these data satisfy \textbf{I--III} with $g_\pm(\la,t)$,
defined by \eqref{1.88} and $\delta_\pm(\la)$, $\hat\delta_\pm(\la)$ defined
by \eqref{S2.6}.

Introduce 
\begin{align}\label{6.2}
F_\pm(x,y,t) =& \frac{1}{2\pi\I}\oint_{\si_\pm}
R_\pm(\la,t) \psi_\pm(\la,x,t) \psi_\pm(\la,y,t) g_\pm(\la,t)d\la +
\\ \nn & {} + \frac{1}{2\pi\I}\int_{\si_\mp^{(1),\mathrm{u}}}
 |T_\mp(\la,t)|^2
\psi_\pm(\la,x,t) \psi_\pm(\la,y,t)g_\mp(\la,t)d\la \\ \nn & {} +
\sum_{k=1}^p (\ga_k^\pm(t))^2 \tilde\psi_\pm(\la_k,x,t)
\tilde\psi_\pm(\la_k,y,t).
\end{align}
Suppose that we are able to prove that $F_\pm$ satisfy
\beq\label{4.311}
\left|\frac{\pa^{n+l}}{\pa x^n\pa y^l}
F_\pm(x,y,t)\right|+\left|\frac{\pa^2} {\pa x\pa t}
F_\pm(x,y,t)\right|\leq \frac{C}{|x+y|^{m_1+2}}\quad
n+l\leq n_1+1,
\eeq
as $x,y\to\pm\infty$ for some $m_1\geq 2$, $n_1\geq 3$, and
$C=C(n_1,m_1,t)$. Then \eqref{4.311} implies that condition {\bf IV} 
holds with $Q_\pm(x)= (1+|x|^{m_1+2})^{-1}$, $\tilde q_\pm(x)=0$, and $C_\pm(a)$,
that exists due to continuity of functions $F_\pm(x,y,t)$ together with their derivatives.  Thus Theorem~\ref{theor4}
ensures the unique solvability of the time dependent GLM equations
\beq\label{ME1}
K_\pm(x,y,t) + F_\pm(x,y,t) \pm \int_x^{\pm\infty}
K_\pm(x,\xi,t) F_\pm(\xi,y,t) d\xi =0, \quad \pm y>\pm x,
\eeq
and yields the function
\beq\label{5.111}
q(x,t) = p_\pm(x,t) \mp 2 \frac{d}{dx}K_\pm(x,x,t).
\eeq
By construction $q$ satisfies (cf.\ \eqref{5.2})
\beq\label{S.2t}
\pm \int_0^{\pm \infty} \left| \frac{\pa^n}{\pa x^n} \big( q(x,t) -
p_\pm(x,t)\big) \right| (1+|x|^{m_1})dx <\infty,\quad 0 \leq n\leq
n_1,
\eeq
and, as in \cite{EGT}, one concludes that \eqref{4.311} also implies differentiability with
respect to $t$ such that
\beq\label{Deriv t}
\pm \int_0^{\pm \infty} \left| \frac{\pa}{\pa t} \big( q(x,t) - p_\pm(x,t)\big) \right|
(1+|x|^{m_1})dx <\infty.
\eeq
Moreover, by following the arguments in Section~6 of \cite{EGT} verbatim
(see, in particular, Lemma~6.3 and Corollary~2.3) one establishes that $q(x,t)$
solves the associated initial-value problem of the KdV equation. Thus, to prove
Theorems~\ref{theor1}, it is sufficient to prove the inequality \eqref{4.311}
with $m_1= \floor{\frac{m_0}{2}}-2$, $n_1=n_0 -m_0 -2$.

\section{Proof of the main result}

To obtain \eqref{4.311} we follow the approach, developed in
\cite{EGT}. First of all, recall that the functions $F_\pm(x,y,t)$
are given by
\begin{align}\label{Fhat}
F_\pm(x,y,t) =& \frac{1}{2\pi\I}\oint_{\si_\pm}
R_\pm(\la,0) \hat\psi_\pm(\la,x,t) \hat\psi_\pm(\la,y,t)
g_\pm(\la,0)d\la +
\\ \nn & {} + \frac{1}{2\pi\I}\int_{\si_\mp^{(1),\mathrm{u}}}
 |T_\mp(\la,0)|^2
\hat\psi_\pm(\la,x,t)\hat \psi_\pm(\la,y,t)g_\mp(\la,0)d\la \\ \nn &
{} + \sum_{k=1}^p (\ga_k^\pm(0))^2 \breve\psi_\pm(\la_k,x,t)
\breve\psi_\pm(\la_k,y,t),
\end{align}
where $\hat\psi_\pm(\la,x,t)$ are defined by \eqref{1.37} and we have set
\beq\label{hattil}
\breve\psi_\pm(\la,x,t):= \delta_\pm(\la,0)\hat\psi_\pm(\la,x,t).
\eeq
Furthermore, recall that the functions $\hat\psi_\pm(\la,x,t)$ inherit their
singularities from $\psi_\pm(\la,x,0)$; that is, they have simple poles on 
$M_\pm(0)$ and square-root singularities  $\hat M_\pm(0)$.
Consequently, the functions \eqref{hattil} are bounded and smooth in small
vicinities of the points $\la_k$. Moreover, all integrands in \eqref{Fhat}
have only integrable singularities (cf.\ \cite[Sect.~5]{BET}) and thus
all three summands in \eqref{Fhat} are well defined. Our aim is to study the decay of $F_\pm(x,y,t)$ as $x$, $y$
tend to $\pm\infty$, respectively.

First of all, we observe, that the third summand in \eqref{Fhat}
(corresponding to the discrete spectrum)  together with all its derivatives decays exponentially as
$x+y\to\pm\infty$. Therefore, it
satisfies \eqref{4.311} for all natural $m_1$ and $n_1$.  In the
second summand,  $\hat\psi_\pm(\la,x,t)\hat
\psi_\pm(\la,y,t)$ together with all derivatives decays exponentially
with respect to $(x+y)\to\pm\infty $ for $\la\notin\si_\pm$. Hence we have to
estimate this summand only in small vicinities of the points
$\si_\pm\cap\si_\mp^{(1)}$.

Our strategy is as follows. In both integrals
of \eqref{Fhat} we make a change of variables
from $\la$ to the quasimomentum variables $\theta_\pm$  and use \eqref{1.23} to represent the integrands as
$\E^{\pm\theta_\pm (x+y)}\rho_\pm(\la(\theta_\pm),x,y,t)$, where  $\rho_\pm$ together with their derivatives are smooth and uniformly bounded with respect
to $x,y\in\R$. Moreover, since these functions are differentiable with respect to $\theta_\pm$
(and also bounded with respect to $x$ and $y$),  we will integrate by parts both integrals in \eqref{Fhat}
as many times as possible and then prove that the boundary terms either cancel or vanish.

To investigate the validity of integration by parts for the
first summand in \eqref{Fhat} we use \eqref{1.23}--\eqref{1.25} to
represent the first summand as
\begin{align}\nn
F_{\pm,R}(x,y,t) &:= 2\Re \int_{\si_\pm^{\mathrm{u}}}
R_\pm(\la,t) \psi_\pm(\la,x,t)
\psi_\pm(\la,y,t) \frac{g_\pm(\la,t)}{2\pi\I}d\la\\ \label{Fc}
&=\Re\int_0^\infty \E^{\pm\I(x+y)\theta_\pm}
\rho_\pm(\theta_\pm,x,y,t) d\theta_\pm,
\end{align}
where
\beq\label{defrho}
\rho_\pm(\theta_\pm,x,y,t) := \frac{1}{2\pi} R_\pm(\la,0) u_\pm(\la,x,t) u_\pm(\la,y,t)
\E^{2\alpha_\pm(\la,t)}
\prod_{j=1}^{r_\pm}\frac{\la -\mu_j^\pm}{\la-\zeta_j^\pm},
\eeq
with $\la=\la(\theta_\pm)$.
Since the integrand in \eqref{Fc} is not
continuous at $\theta_\pm(E_{2k+1}^\pm)=\theta_\pm(E_{2k+2}^\pm)$, we regard this integral as
\beq\label{sumFc}
F_{\pm,R}(x,y,t)=
\Re \sum_{k=0}^{r_\pm +1} \int_{\theta_\pm(E_{2k}^\pm)}^
{\theta_\pm(E_{2k+1}^\pm)} \E^{\pm\I (x+y) \theta}
\rho_\pm(\theta,x,y,t) d\theta,
\eeq
where we have set
\[
E_{2r_\pm+1}^\pm=E_{2r_\pm+2}^\pm=\tilde E>\max\{E_{2r_+}^+,
E_{2r_-}^-\},
\]
and $E_{2r_\pm+3}^\pm=+\infty$ for notational convenience.

The boundary terms arising from integration by parts (except for the last one, corresponding to
$+\infty$) become
\beq\label{outint}
\Re \lim_{\la\to E}\frac{\E^{\pm\I\theta_\pm(E)(x+y)}\frac{\pa^s \rho_\pm(\theta_\pm,
x,y,t)}{\pa\theta_\pm^s}}{\left(\I (x+y)\right)^{s+1}},\quad
E\in\pa\si_\pm\cup\tilde E,\: s=0,1,\dots,m.
\eeq
The number $m$ of possible integrations by parts is directly related to the smoothness
of $R_\pm(\la,0)$ and thus the to the values of $m_0$ and $n_0$.
To estimate the boundary terms in \eqref{sumFc} we distinguish three cases:

\begin{enumerate}
\item[1)]
$E\in\pa\si_\pm\cap\pa\si$ (points $E_1$, $E_2$ in our example and
also point $E_3$ for $F_{-,R}(x,y,t)$);
\item[2)]
$E\in\pa\si_\pm\cap\inte(\si_\mp)$ (the point $E_5$ for $F_{-,R}(x,y,t)$);
\item[3)]
$E\in\pa\si_-^{(1)}\cap\pa\si_+^{(1)}$ (the point $E_4$).
\end{enumerate}

In the first case, the boundary terms \eqref{outint} will vanish. In the second and the third
cases, however, these terms do not vanish, but we will prove, that they
cancel with a corresponding terms from the second summand in \eqref{Fhat}.
Finally, the two boundary terms stemming from our artificial boundary
point $\tilde E$ will cancel and hence do not need to be taken into account.

The following result which takes care of 1), is an immediate consequence of the proof of
\cite[Lemma~6.2]{EGT}.

\begin{lemma} \label{lemestim3}
Let $E\in\pa\si_\pm\cap\pa\si$. Then the
following limits exists and assume either real or purely imaginary values:
\beq\label{reflest} \lim_{\la\to
E,\,\la\in\si_\pm}\E^{\pm\I\theta_\pm(E)(x+y)}
\frac{\pa^s}{\pa\theta_\pm^s} \rho_\pm(\theta_\pm,x,y,t)\in \I^s\R,
\eeq
for $s=0,\dots, m_0-1$ if $\hat W(E)\ne 0$ and $s=0,\dots, m_0-2$
if $\hat W(E)= 0$.
\end{lemma}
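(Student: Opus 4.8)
The plan is to trivialize the whole statement by passing to the quasimoment as a local coordinate and then reducing everything to a single reflection symmetry. Introduce $\tau := \theta_\pm(\la) - \theta_\pm(E)$, which is real on $\si_\pm$ near $E$ and, by \eqref{1.25}, satisfies $\frac{d\theta_\pm}{d\la} = \frac{P_\pm(\la)}{\sqrt{\la-E}}$ with $P_\pm$ real-analytic and $P_\pm(E)$ real and nonzero (the factor $\I$ cancels because $Y_\pm^{1/2}$ is purely imaginary on the band, $Y_\pm<0$ there). Integrating gives $\tau = \sqrt{\la-E}\,h\big((\sqrt{\la-E})^2\big)$ with $h$ real-analytic and $h(0)\ne0$; hence $z:=\sqrt{\la-E}$ is an \emph{odd} real-analytic function of $\tau$, the upper (resp.\ lower) rim of $\si_\pm$ corresponds to $\tau>0$ (resp.\ $\tau<0$), $\la-E$ is an \emph{even} real-analytic function of $\tau$, and $\frac{\pa}{\pa\theta_\pm}=\frac{\pa}{\pa\tau}$. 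Thus it suffices to show, writing $G_\pm(\tau):=\E^{\pm\I\theta_\pm(E)(x+y)}\rho_\pm$, that $\frac{\pa^s}{\pa\tau^s}G_\pm\big|_{\tau=0}\in\I^s\R$.

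The organizing observation is the elementary characterization: a function $f$ that is $N$ times differentiable in $\tau$ near $0$ (for $\tau$ of both signs) satisfies $f^{(s)}(0)\in\I^s\R$ for all $0\le s\le N$ if and only if $\overline{f(\tau)}=f(-\tau)$ for real $\tau$. This property is preserved under sums, products, quotients (with nonvanishing denominator), and $f\mapsto\E^f$, and it is inherited under composition with the odd real map $z(\tau)$. So I would verify it factor by factor in \eqref{defrho}. For $R_\pm(\la,0)$, property \textbf{I.~(a)} of Lemma~\ref{lem2.3} reads $\overline{R_\pm(\lau)}=R_\pm(\lal)$, i.e.\ $\overline{R_\pm(\tau)}=R_\pm(-\tau)$. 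For the Weyl data: $c_\pm,s_\pm$ are real, and $\overline{m_\pm(\lau)}=m_\pm(\lal)$ by \eqref{1.29} since $H_\pm$ is real and $Y_\pm^{1/2}$ is purely imaginary on the band, whence $\overline{\psi_\pm(\tau,\cdot)}=\psi_\pm(-\tau,\cdot)$ and, via \eqref{1.38}, $\overline{\alpha_\pm(\tau)}=\alpha_\pm(-\tau)$, so $\E^{2\alpha_\pm}$ is symmetric as well. The rational factor $\prod_j\frac{\la-\mu_j^\pm}{\la-\zeta_j^\pm}$ is real for real $\la$ and $\la(\tau)=\la(-\tau)$, so it is trivially symmetric.

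The one genuinely $x,y$-dependent piece is where the prefactor $\E^{\pm\I\theta_\pm(E)(x+y)}$ earns its keep: using \eqref{1.23} to write $u_\pm=\psi_\pm\E^{\mp\I\theta_\pm x}$,
\[
\E^{\pm\I\theta_\pm(E)(x+y)}\,u_\pm(\la,x)\,u_\pm(\la,y)=\psi_\pm(\la,x)\,\psi_\pm(\la,y)\,\E^{\mp\I\tau(x+y)},
\]
and the right-hand side satisfies $\overline{(\,\cdot\,)(\tau)}=(\,\cdot\,)(-\tau)$ because $\overline{\psi_\pm(\tau,\cdot)}=\psi_\pm(-\tau,\cdot)$ and $\overline{\E^{\mp\I\tau(x+y)}}=\E^{\mp\I(-\tau)(x+y)}$. (Note that $u_\pm$ alone fails this, the discrepancy being exactly the $\E^{\pm2\I\theta_\pm(E)x}$ absorbed by the prefactor, which explains its presence in the statement.) Multiplying the factors, $G_\pm$ satisfies $\overline{G_\pm(\tau)}=G_\pm(-\tau)$, and the characterization yields $\frac{\pa^s}{\pa\tau^s}G_\pm(0)\in\I^s\R$.

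Finally, the admissible range of $s$ is dictated by the least regular factor. Every factor above except $R_\pm$ is real-analytic in $z=\sqrt{\la-E}$ (the square-root singularities of $\psi_\pm$ and $m_\pm$ at $\mu_j^\pm=E$ being compensated inside $\rho_\pm$, cf.\ Lemma~\ref{lemMhat}), so the count is governed by $R_\pm$: by \textbf{III.~(a)} of Lemma~\ref{lem2.3} with $m_1=m_0$ it is $m_0-1$ times differentiable in $z$ when $\hat W(E)\ne0$ and $m_0-2$ times when $\hat W(E)=0$, and since $z\mapsto\tau$ is a local real-analytic diffeomorphism these counts carry over to $\tau$, giving precisely $s\le m_0-1$, respectively $s\le m_0-2$. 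I expect the only delicate point to be the bookkeeping of this $\sqrt{\la-E}$ branch structure at the band edge together with the $\hat W(E)=0$ case — which is exactly the computation performed in the proof of \cite[Lem.~6.2]{EGT}, from which the present claim is immediate. It is also worth recording for later use that, combined with the denominator $(\I(x+y))^{s+1}$, the $\I^s\R$ structure forces the boundary term \eqref{outint} to be purely imaginary, so that its real part vanishes — which is the point of case 1).
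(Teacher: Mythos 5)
Your argument is correct and is, in substance, exactly the argument the paper delegates to the proof of \cite[Lem.~6.2]{EGT}: translate the conjugation symmetries $\overline{R_\pm(\lau)}=R_\pm(\lal)$, $\overline{\psi_\pm(\lau,\cdot)}=\psi_\pm(\lal,\cdot)$, $\overline{\alpha_\pm(\lau,t)}=\alpha_\pm(\lal,t)$ into the local odd coordinate $\tau=\theta_\pm-\theta_\pm(E)$ (upper/lower rim $\leftrightarrow\pm\tau$), observe that the prefactor $\E^{\pm\I\theta_\pm(E)(x+y)}$ repairs the failure of $u_\pm$ alone to be symmetric, and read off $\overline{G_\pm(\tau)}=G_\pm(-\tau)$, whence $G_\pm^{(s)}(0)\in\I^s\R$, with the range of $s$ governed by the $\sqrt{\la-E}$-regularity of $R_\pm$ from Lemma~\ref{lem2.3} \textbf{III.~(a)}. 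The only (immaterial) blemish is that your ``if and only if'' characterization is really only an ``if'' for finitely differentiable functions, and that direction is all you use.
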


This lemma shows, that the boundary terms \eqref{outint} vanish at the
points corresponding to case 1). Before turning to the cases 2) and 3)
let us first start by discussing smoothness of the integrand
$\rho_\pm(\theta,x,y,t)$ in \eqref{sumFc}.

Since except for $R_\pm(\la,0)$, all other parts of $\rho_\pm(\theta_\pm,x,y,t)$
are smooth with respect to $\la\in\inte(\si_\pm)$, it suffices to look
at $R_\pm(\la,0)$. By Lemma~\ref{lem2.3}, {\bf III.~(a)} the latter function
has $m_0$ derivatives with respect to $\la$ (and consequently also
with respect to $\theta_\pm$) as long as we stay in the interior of $\si_\pm$
and away from boundary points of $\si_\mp$. Hence no such points
 pose any problems; the only problematic points are
those in $\pa\si_\mp \cap \inte(\si_\pm)$ (the point $E_5$ in our
example for $F_{+,R}(x,y,t)$). Hence we will address this issue first.

Let $E\in\pa\si_\mp \cap \inte(\si_\pm)$ be such a point.
As already pointed out, only $R_\pm(\la,0)$ matters and by
Lemma~\ref{lem2.3}, {\bf III.~(a)} we can write it locally as a
smooth function of $\sqrt{\la-E}$. Thus we obtain
\beq\label{sing}
\frac{\pa^s\rho_\pm(\theta_\pm,x,y,t)}{\pa\theta_\pm^s}=
O\left(\frac{1}{\sqrt{(\la-E)^{2s-1}}}\right).
\eeq
Since this singularity is non-integrable for $s\ge 2$,
integration by parts is not an option near such points. Hence we
ill split off the leading behavior near such a point. The leading
term near each such point can be computed explicitly and the
remainder can be handled by integration by parts.

Since the last interval $(\tilde E,\infty)$ does not contain such points
we can restrict our attention to finite intervals. Moreover, for notational
convenience we will restrict ourselves to the case of $F_{+,R}$.

Abbreviate $\theta=\theta_+$ and denote by
\[
E_i\in \pa\si_-\cap\left(E_{2j}^+,E_{2j+1}^+\right),\quad i=1,\dots,N,
\]
our {\em bad} points. Let $\varepsilon>0$ and introduce the cutoff functions
\beq\label{func10}
B_i(\theta):=B(\frac{\theta-\theta(E_i)}{\varepsilon}),\quad i=1,\dots,N,
\eeq
where
\beq\label{funk1}
B(\xi)=\begin{cases}
\E^{-\xi^2}\left(1 - \xi^{2m_0}\right)^{m_0}, & \mbox{for }
|\xi|\leq 1,\\
0, & \mbox{for } |\xi|\geq 1.
\end{cases}
\eeq
We  choose $\varepsilon>0$ so small  that the supports of the
functions $B_i(\theta)$  neither intersect nor  contain small
vicinities of the points $\theta(E_{2j}^+)$ and $\theta(E_{2j+1}^+)$.
Moreover, we have
\beq\label{funk2}
\aligned
\frac{d^s B_i}{d\theta^s}(\theta(E_i)\pm\varepsilon)=0,,\: s=0,\dots,m_0-1, \\
\frac{d^s B_i}{d\theta^s}(\theta(E_i))=0,\: s=1,\dots,2m_0+1.
\endaligned
\eeq
Now we can rewrite the $j$-th summand in \eqref{sumFc} (except for the last one) as
\begin{align*}
& \int_{\theta(E_{2j}^+)}^ {\theta(E_{2j+1}^+)}
\E^{\I (x+y) \theta} \rho_+(\theta,x,y,t) d\theta=\\
&\qquad = \int_{\theta(E_{2j}^+)}^{\theta(E_{2j+1}^+)}
\E^{\I (x+y) \theta} \left(1-\sum_{i=1}^N
B_i(\theta)\right)\rho_+(\theta,x,y,t) d\theta + \\
&\qquad\quad +\sum_{i=1}^N \int_{-\infty}^ {\infty}
\E^{\I (x+y) \theta}
B_i(\theta)\rho_+(\theta,x,y,t) d\theta.
\end{align*}
Because of \eqref{funk2} the first term can be integrated by parts $m_0$ times
and thus is covered by Lemma~\ref{lemestim3}. For the second
term, we switch to the local variable $z=\sqrt{\theta - \theta(E_i)}$
and use a Taylor expansion for the integrand,
\[
\rho_+(\theta,x,y,t)=\rho_0^{(i)}(x,y,t) + \rho_1^{(i)}(x,y,t) z
+\dots+\rho_{m_0-1}^{(i)}(x,y,t) z^{m_0-2} + \beta_i(\theta),
\]
where $\beta_i(\theta) = O\left(z^{m_0-1}\right)$
has $\floor{\frac{m_0}{2}}$ integrable derivatives with respect to
$\theta$ in a small vicinity of the point $\theta(E_i)$. By construction
\[
\frac{\pa^s (B_i\beta_i)}{\pa\theta^s}
(\theta(E_i)\pm\varepsilon)=0,\quad
s=0,\dots,\floor{\frac{m_0}{2}},
\]
and thus 
\beq\label{funk7}
\int_{-\infty}^ {\infty} \E^{\I (x+y) \theta} B_i(\theta)\beta_i(\theta) d\theta=
O\left( (x+y)^{-\floor{\frac{m_0}{2}}}\right).
\eeq
To compute the remaining terms, observe that
\begin{align*}
& \int_{-\infty}^ {\infty} \E^{\I (x+y) \theta}
B_i(\theta)\left(\sqrt{\theta - \theta(E_i)}\right)^\nu d\theta=\\
& \quad = (\varepsilon)^{\nu/2+1} \E^{\I(x+y)\theta(E_i) }
\int_{-1}^1 \E^{-\zeta^2+ \I \varepsilon(x+y) \zeta }
\left(1 - \zeta^{2m_0}\right)^{m_0}\zeta^{\nu/2}d\zeta,
\end{align*}
and note that we can extend the integral over the interval $(1,1)$ to the interval$(-\infty,\infty)$, since
\[
\int^{\pm\infty}_{\pm 1} \E^{-\zeta^2 + \I \varepsilon(x+y) \zeta}
\left(1 - \zeta^{2m_0}\right)^{m_0}\zeta^{\nu/2}d\zeta
= O\!\left( (x+y)^{-m_0-1}\right).
\]
Now we  simply expand
\[
\left(1 -
\zeta^{2m_0}\right)^{m_0}= 1 -m_0 \zeta^{2 m_0} + \cdots +(-1)^{m_0}
\zeta^{2m_0^2}
\]
and evaluate the integral by invoking the integral representation \cite[9.241]{GR}\footnote{It
also follows from 3.462 3, but this formula contains a sign error.}
for the parabolic cylinder functions $\mathcal D_\kappa(z)$ (cf.\ \cite{GR}, \cite{WW}). This
 gives
\begin{align}\nn
&\int_{-\infty}^\infty \E^{\I \varepsilon (x+y) \zeta}
\E^{-\zeta^2} \zeta^{\kappa}d\zeta =\\ \label{kapp}
& \qquad =(-\I)^{\kappa} 2^{-\kappa/2} \sqrt\pi\exp\left(-\frac{\varepsilon^2
(x+y)^2}{8}\right)
\mathcal{D}_{\kappa}\left(\frac{\varepsilon(x+y)}{\sqrt 2}\right), \quad \Re(\kappa)>-1.
\end{align}
Since the parabolic cylinder functions have the following expansion \cite[9.246 1]{GR},
\[
\mathcal D_\kappa(z)\sim z^\kappa \E^{-\frac{z^2}{4}}\left(1 -
\frac{\kappa(\kappa -1)}{2 z^2} + \cdots\right), \quad |\arg(z)|<\frac{3\pi}{4},
\]
for large $z$,  the integral \eqref{kapp} decays exponentially as $(x+y)\to\infty$
for any $\kappa>0$. Combining these estimates
with Lemma~\ref{lemestim3} we obtain the following

\begin{lemma}\label{lemest4}
Let $E_{2j}^\pm, E_{2j+1}^\pm\in\pa\si_\pm\cap\pa\si$. Then
\beq\label{spec}
\frac{\partial^{n+l}}{\partial x^n\partial y^l}\Re
\int_{\theta_\pm(E_{2j}^\pm)}^{\theta_\pm(E_{2j+1}^\pm)} \E^{\pm\I
(x+y) \theta} \rho_\pm(\theta,x,y,t) d\theta = O\!\left(
(x+y)^{-\floor{\frac{m_0}{2}}}\right)
\eeq
as $x,y\to\pm\infty$ for all fixed $n,l=0,1,\dots$.
\end{lemma}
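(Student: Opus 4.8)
The claim of Lemma~\ref{lemest4} is a decay statement for a single interval $(\theta_\pm(E_{2j}^\pm),\theta_\pm(E_{2j+1}^\pm))$ whose endpoints are of type~1), i.e. $E\in\pa\si_\pm\cap\pa\si$. The plan is to assemble the result directly from the three ingredients already laid out above the statement: the repeated integration-by-parts mechanism of \eqref{sumFc}–\eqref{outint}, the vanishing of the resulting boundary terms via Lemma~\ref{lemestim3}, and the special treatment of the \emph{bad} interior points $E_i\in\pa\si_\mp\cap\inte(\si_\pm)$ through the cutoff decomposition \eqref{func10}–\eqref{funk7} together with the parabolic-cylinder computation \eqref{kapp}. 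Since the integrand $\rho_\pm$ is smooth on $\inte(\si_\pm)$ except near these bad points, and we have just quantified the contribution of each bad point, the argument is essentially a bookkeeping synthesis. As in the text, it suffices to treat $F_{+,R}$ and write $\theta=\theta_+$.

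\begin{proof}
We treat $F_{+,R}$, the case of $F_{-,R}$ being analogous. Fix $n,l$ and differentiate under the integral sign in \eqref{spec}; each derivative in $x$ or $y$ brings down a factor of $\theta$ from $\E^{\I(x+y)\theta}$ and acts on $\rho_+$, which by \eqref{defrho} is smooth and uniformly bounded in $x,y\in\R$ together with its $\theta$-derivatives away from the bad points. Hence it suffices to establish the bound for $n=l=0$, the general case following verbatim with $\rho_+$ replaced by $\theta^{n+l}\partial_x^n\partial_y^l\rho_+$, which enjoys the same regularity. Using the cutoff functions $B_i$ from \eqref{func10} we split the integral over $(\theta(E_{2j}^+),\theta(E_{2j+1}^+))$ into the \emph{regular part} carrying the factor $1-\sum_i B_i(\theta)$ and the $N$ \emph{localized parts} carrying $B_i(\theta)$, exactly as in the display preceding \eqref{funk7}.

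For the regular part, the factor $1-\sum_i B_i$ vanishes to order $m_0$ at the interior bad points by \eqref{funk2}, so the integrand is $C^{m_0}$ on the closed interval and we may integrate by parts $m_0$ times. The only surviving boundary terms are at the endpoints $E_{2j}^+,E_{2j+1}^+\in\pa\si_+\cap\pa\si$, and these are precisely of the form \eqref{outint}; by Lemma~\ref{lemestim3} the limits \eqref{reflest} exist and lie in $\I^s\R$, so after taking the real part $\Re$ each such boundary term vanishes (for $s=0,\dots,m_0-1$, or $s=0,\dots,m_0-2$ when $\hat W(E)=0$). The remaining integral after $m_0$ integrations by parts carries a prefactor $(\I(x+y))^{-m_0}=O((x+y)^{-m_0})$, which dominates the asserted rate since $m_0\ge 2\floor{m_0/2}\ge\floor{m_0/2}$.

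For each localized part, we insert the Taylor expansion of $\rho_+$ in the local variable $z=\sqrt{\theta-\theta(E_i)}$ displayed before \eqref{funk7}. The remainder $\beta_i$ contributes $O((x+y)^{-\floor{m_0/2}})$ by \eqref{funk7}, which is exactly the rate claimed. Each polynomial term $\rho_\nu^{(i)}(x,y,t)\,z^{\nu}$ produces, after the substitution $\zeta=(\theta-\theta(E_i))/\varepsilon$ and extension of the range to $(-\infty,\infty)$ at the cost of an $O((x+y)^{-m_0-1})$ error, an integral of the type \eqref{kapp}; by the asymptotics of the parabolic cylinder functions $\mathcal D_\kappa$ these decay exponentially in $x+y$, and the coefficients $\rho_\nu^{(i)}(x,y,t)$ are uniformly bounded in $x,y$. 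Summing the three contributions — regular part $O((x+y)^{-m_0})$, polynomial terms exponentially small, remainder $O((x+y)^{-\floor{m_0/2}})$ — over the finitely many bad points $i=1,\dots,N$ yields the bound $O((x+y)^{-\floor{m_0/2}})$, which is \eqref{spec}. The main obstacle is the bookkeeping around the bad points: one must verify that the Taylor remainder $\beta_i$ genuinely possesses $\floor{m_0/2}$ integrable $\theta$-derivatives (so that \eqref{funk7} applies) and that the cutoff $B_i$ is matched to order $m_0$ at $\theta(E_i)\pm\varepsilon$ and to order $2m_0+1$ at $\theta(E_i)$ per \eqref{funk2}, so that neither the regular-part integration by parts nor the localized-part extension introduces uncontrolled boundary contributions.
\end{proof}
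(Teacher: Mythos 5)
Your proposal is correct and follows essentially the same route as the paper, which presents this argument as the running discussion preceding the lemma: the cutoff decomposition \eqref{func10}--\eqref{funk2}, $m_0$-fold integration by parts on the regular part with the boundary terms killed by Lemma~\ref{lemestim3} after taking real parts, and the localized parts handled by the Taylor expansion in $z=\sqrt{\theta-\theta(E_i)}$ with the remainder controlled by \eqref{funk7} and the polynomial terms by the parabolic cylinder integral \eqref{kapp}. Your additional remark that the $x,y$ derivatives only insert the bounded factor $\theta^{n+l}$ on a finite interval matches the paper's later observation that derivatives are harmless except on the unbounded interval, which is outside the scope of this lemma.
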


Note, that the condition $E_{2j}^\pm, E_{2j+1}^\pm\in\pa\si_\pm\cap\pa\si$ is only
used to take care of the boundary terms obtained from integration by parts and can hence
be replaced with any other condition which takes care of these terms.

Now we come to case 2) and study the behavior of the boundary terms
at the points $E\in\pa\si_\pm\cap\inte \si_\mp$. In this case formula
\eqref{reflest} remains valid only for $s=0$, so we need to take
the second summand in \eqref{Fhat} into account.

For notational convenience we  consider only the $+$ case
and assume, without loss of generality, that $E=E_{2j}^+$ . In this case,
$\si^{(2)}$ is located to the right of $E$ and $\si_-^{(1)}$ to the
left. Moreover, without loss of generality, we assume that
the other boundary terms are already covered by the previous
considerations so that we do not have to worry about them.

Choose $\varepsilon>0$ so small that
\[
[\la(\theta_+(E) + \I\varepsilon), E]\subset \left((\xi_j^+, E]\cap
\si_-^{(1)}\right),\quad (E,\la(\theta_+(E +\varepsilon))]\subset
\inte \si^{(2)}.
\]
Introduce in these two small intervals the two new (positive) variables
\beq\label{ash}
h:=\frac{\theta_+ - \theta_+(E)}{\I},\quad k:=\theta_+ - \theta_+(E).
\eeq
We  compare the boundary terms at the point $E$ for the two integrals:
\beq\label{Rint}
\Re \int_{\theta(E)}^ {\theta(E+\varepsilon)} \E^{\I (x+y) \theta_+}
\rho_+(\theta_+,x,y,t) d\theta_+ =
\Re\int_0^\varepsilon R(k)\Psi(\la(k),x,y,t) \E^{\I k (x+y)} dk
\eeq
and
\begin{align}\nn
&\int_{\la(\theta(E) + \I\varepsilon)}^E |T_-(\la,0)|^2
\hat\psi_+(\la,x,t)\hat \psi_+(\la,y,t)\frac{g_-(\la,0)}{2\pi\I}d\la =\\ \label{T}
& \qquad = \int_\varepsilon^0 P(h)\Psi(\la(h),x,y,t) \E^{-h(x+y)} dh,
\end{align}
with
\beq\label{rho1}
\Psi(\la,x,y,t)=\frac{\E^{\I\theta(E)(x+y)}}{2\pi}
\prod_{j=1}^{r_+}\frac{\la -\mu_j^+}{\la -\zeta_j^+} \E^{-\I
(x+y) \theta} \hat\psi_+(\la,x,t) \hat\psi_+(\la,y,t),
\eeq
and
\begin{align}\label{defR}
R(k) := &R_+(\la,0),\\\nn
P(h) :=& \frac{-\I}{2g_+(\la,0)g_-(\la,0) |W(\la,0)|^2}\\\label{defP}
=& \frac{-\I}{2g_+(\la,0)g_-(\la,0)\wronsk_0(\phi_-,\phi_+)
\wronsk_0(\overline{\phi_-},\phi_+)},
\end{align}
where $\wronsk_0(\cdot,\cdot)=\wronsk(\cdot,\cdot)|_{t=0}$.
Equation \eqref{defP} was obtained by using \eqref{S2.18} together with the fact that
$\overline{g_-(\la,0)} = -g_-(\la,0)$ if $\la\in\si_-$.

Integrating  \eqref{Rint} and \eqref{T} by parts with respect to $k$
and $h$, respectively, gives
\begin{align}\nn
& \int_{\varepsilon}^0 P(h)\Psi(\la(h)) \E^{-h(x-y)} dh =
-\sum_{j=0}^{m-1} \frac{1}{(x-y)^{j+1}}\frac{\pa^j (P \Psi)}{\pa h^j} (0) \\ \label{decompos}
& \qquad {} + \frac{1}{(x-y)^m}\int_\varepsilon^0 \frac{\pa^m (P\Psi)}{\pa h^m} \E^{-h(x-y)} dh
+ O(\E^{-\varepsilon(x-y)}) ,
\end{align}
\begin{align}\nn
& \Re\int_0^\varepsilon R(k)\Psi(\la(k)) \E^{\I k(x-y)} dk=
\Re \sum_{j=0}^{m-1} \frac{1}{(-\I (x-y))^{j+1}}\frac{\pa^j(R\Psi)}{\pa k^j}(0)\\ \label{decomposR}
& \qquad {} +  \Re \frac{1}{(-\I(x-y))^m}\int_0^\varepsilon\frac{\pa^{m}(R\Psi)}
{\pa k^m} \E^{\I k(x-y)} d k.
\end{align}
For the boundary terms to cancel each other we need
\beq\label{import}
\lim_{k\to 0}\Re \left(\I^{j+1} \frac{\pa^j(R\Psi)}{\pa k^j}(k)\right)=
\lim_{h\to 0}\frac{\pa^j(P\Psi)} {\pa h^j}(h), \quad j=0,\dots,m_0-1,
\eeq
where the left limit is taken from the side of the spectrum of multiplicity two and
the right limit is taken from the side of the spectrum of multiplicity one.
Since $\Psi$ is smooth to any degree with respect to $k$ and $h$ near $E$,
\beq
\lim_{k\to 0} \left(\I^j \frac{\pa^j \Psi}{\pa k^j}(k)\right)=
\lim_{h\to 0}\frac{\pa^j \Psi} {\pa h^j}(h), \quad j=0,\dots
\eeq
We observe that to prove \eqref{import}, it suffices to prove the following lemma.

\begin{lemma}\label{Lemma 3.3}
Let $h, k, P(h), R(k)$ be defined by \eqref{ash}, \eqref{defR}, and \eqref{defP}.
If $E\in\pa\si_\pm\cap\inte(\si_\mp)$, then
\beq\label{import1}
\lim_{k\to 0} \Re\left(\I^{j+1}\,\frac{d^j R(k)}{d k^j}\right)=
\lim_{h\to 0} \frac{d^j P(h)}{d h^j}, \qquad j=0,\dots,m_0-1.
\eeq
\end{lemma}

\begin{proof}
To prove this formula, recall that $\phi_-(\cdot,x)$,
$\ov{\phi_-(\cdot,x)}\in C^{m_0}(E-\varepsilon, E+\varepsilon)$ (and
similarly for the $x$ derivative)  since
$E\in\inte \si_-$. Therefore their derivatives with respect to
$\sqrt{\la - E}$ are smooth in a vicinity of $k=0$.  Without loss of
generality, we suppose\footnote{Otherwise replace $\phi_+(\la,x,0)$
by $\phi_{+,E}(\la,x,0)$ and $g_+(\la,0)$ by $g_{+,E}(\la,0)$  (cf.\ \eqref{phige})
in the subsequent considerations.},
that $E\neq \mu_j^+$, that is, the function $\phi_+(\la,x,0)$ as well as the
functions $g_+(\la,0)$ and $g_-(\la,0)$ (see \eqref{1.88}) are also
smooth with respect to $\sqrt{\la - E}$. For $\la>E$ introduce the function
\[
\tilde{P}(k):=  \frac{-\I}{2g_+(\la,0)g_-(\la,0)\wronsk_0(\phi_-,\phi_+)
\wronsk_0(\overline{\phi_-},\phi_+)}.
\]
Then
\beq\label{mainn}
\lim_{k\to +0} \I^s \frac{d^s \tilde{P}(k)}{d k^s}= \lim_{h\to +0}\frac{d^s P}{dh^s}.
\eeq
From $g_\pm(\la,0)^{-1}=\pm\wronsk_0(\phi_\pm,\overline{\phi_\pm})$ we
see that
\beq\label{def P1}
\tilde{P}(k)= \frac{\I\,\wronsk_0(\phi_-,\ov{\phi_-})\wronsk_0(\phi_+,\ov{\phi_+})}
{2\wronsk_0(\phi_-,\phi_+)
\wronsk_0(\ov{\phi_-},\phi_+)}.
\eeq
Substituting
\[
\phi_+(\la,x,0)=\frac{\wronsk_0(\phi_+,\ov{\phi_-})}
{\wronsk_0(\phi_-,\ov{\phi_-})} \phi_-(\la,x,0)
 -\frac{\wronsk_0(\phi_+,\phi_-)}{\wronsk_0(\phi_-,\ov{\phi_-})}
 \ov{\phi_-(\la,x,0)}
\]
into the numerator of \eqref{def P1} gives
\[
\tilde{P}(k)=\frac{\I}{2}\left(-\frac{\wronsk_0(\phi_-,\ov{\phi_+})}
{\wronsk_0(\phi_-,\phi_+)}+
\frac{\wronsk_0(\ov{\phi_-},\ov{\phi_+})}{\wronsk_0(\ov{\phi_-},\phi_+)}
\right).
\]
Introducing the abbreviations
\beq\label{wrons1}
W(k):=\wronsk_0(\phi_-,\phi_+), \quad
V(k):=\wronsk_0(\phi_-,\ov{\phi_+}).
\eeq
we thus have
\beq\label{P1}
R(k)= - \frac{V(k)}{W(k)}, \qquad \tilde{P}(k) =
\frac{\I}{2}\left(-\frac {V(k)}{W(k)} +\frac{\ov{W(k)}}{\ov{V(k)}} \right).
\eeq
Next, for all $x$  and  small $\varepsilon >0$, we have
\[
\phi_-(\la,x,0), \frac{\pa}{\pa x}\phi_-(\la,x,0) \in
C^{m_0}(E-\varepsilon, E+\varepsilon).
\]
Therefore, according to Lemma \ref{lemMhat} {\bf (ii)}, near $k=0$,
for positive $k$, we have the representation
\[
W(k) - V(k)= \I k f_1(k^2),\quad W(k) +V(k)=f_2(k^2),
\]
where $f_{1,2}(\cdot)\in C^{m_0-1}([0,\varepsilon_1))$. Differentiating
these relations gives
\beq\label{signs}
\lim_{k\to+0} \frac{\pa^s}{\pa k^s} V(k) =
(-1)^s\lim_{k\to+0} \frac{\pa^s}{\pa k^s} W(k),\qquad
s=0,\dots,m_0-1,
\eeq
and hence we see that $V(k)= W_{m_0-1}(-k)+o(k^{m_0-1})$, where
$W_{m_0-1}(k)$ is the Taylor polynomial of degree $m_0-1$ for $W(k)$.
Now recall
\[
\wronsk_0(\phi_-,\phi_+)(E)\neq 0,
\]
which implies that $R^{-1}(k)=R_{m_0-1}(-k) + o(k^{m_0-1})$, where $R_{m_0-1}(k)$
is the Taylor polynomial of degree $m_0-1$ for $R(k)$. Thus, we finally obtain
\beq
\tilde{P}(k)=\frac{\I}{2}\left( R_{m_0-1}(k) - \ov{R_{m_0-1}(-k)}\right) +o(k^{m_0-1}),
\eeq
from which \eqref{import1} follows.
\end{proof}

Lemma \ref{Lemma 3.3} settles case 2). Case 3) will follow from the next lemma.
\begin{lemma}
Let $h, k, P(h), R(k)$ be defined by \eqref{ash}, \eqref{defR}, and \eqref{defP}. Then,
if $E\in\pa\si_-^{(1)}\cap\pa\si_+^{(1)}$,
\beq
\lim_{k\to 0} \Re\left(\I^{j+1}\,\frac{d^j R(k)}{d k^j}\right)=
\lim_{h\to 0} \frac{d^j P(h)}{d h^j}, \qquad j=0,\dots,m_0-1.
\eeq
\end{lemma}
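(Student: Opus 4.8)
The plan is to follow the scheme of Case~2, replacing only the algebra after \eqref{def P1}, which breaks down here. The decisive difference is that on the side $\la>E$, i.e.\ on $\si_+^{(1)}$ where $R(k)=R_+(\la,0)$ lives, the spectral parameter lies in a \emph{gap} of $\si_-$, so the Jost solution $\phi_-(\la,\cdot)$ is real and $\wronsk_0(\phi_-,\ov{\phi_-})\equiv0$ there; hence the Case~2 substitution, which divides by this Wronskian, is unavailable. I would instead work throughout in the single local variable $z=\sqrt{\la-E}$ (equivalently $w=\theta_+(\la)-\theta_+(E)=O(z)$, an odd change of variables with real coefficients), in which $\la=E+z^2$, the right side $\si_+^{(1)}$ is $z$ real and the left side $\si_-^{(1)}$ is $z$ imaginary. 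By Lemma~\ref{lemMhat}, applied at both edges (and, if $E=\mu_j^\pm$, after passing to $\phi_{\pm,E}$, $g_{\pm,E}$), the solutions $\phi_\pm$ and their $x$-derivatives are $C^{m_0-1}$ in $z$, while \eqref{struct3} and the reality \eqref{realval} off the respective spectrum yield the reflection laws $\ov{\phi_+(w)}=\phi_+(-\ov w)$ and $\ov{\phi_-(w)}=\phi_-(\ov w)$.

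Keeping the notation $W(k)=\wronsk_0(\phi_-,\phi_+)$, $V(k)=\wronsk_0(\phi_-,\ov{\phi_+})$ of \eqref{wrons1}, now regarded as $C^{m_0-1}$ functions of $w$ (so that $V(w)=\wronsk_0(\phi_-(w),\phi_+(-w))$), the reflection laws give $V(w)=\ov{W(\ov w)}$; thus $W$ and $V$ have complex-conjugate Taylor coefficients. Consequently $R(k)=-V/W=:\hat R(w)$ is the boundary value ($w=k$ real) of an analytic function satisfying $\ov{\hat R(\ov w)}=\hat R(w)^{-1}$, that is, $|R|=1$ on $\si_+^{(1)}$; it is genuinely $C^{m_0-1}$ provided $W(0)=\wronsk_0(\phi_-,\phi_+)(E)\ne0$. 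I would also record the two odd Wronskians
\[
E_+(w):=\wronsk_0(\phi_+(w),\phi_+(-w)),\qquad E_-(w):=\wronsk_0(\phi_-(w),\phi_-(-w)),
\]
for which $g_\pm(\la,0)^{-1}=\pm\wronsk_0(\phi_\pm,\ov{\phi_\pm})=\pm E_\pm(w)$ on the respective spectrum, continued analytically, so that $g_+g_-=-(E_+E_-)^{-1}$.

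The heart of the matter is to show that $P$ is the boundary value of the \emph{same} function $\hat R$. The ``miracle'' is the Pl\"ucker relation for the four solutions $\phi_-(\pm w),\phi_+(\pm w)$,
\[
E_+(w)E_-(w)=W(w)W(-w)-V(w)V(-w),
\]
by which $1/(g_+g_-)=-E_+E_-$ is expressed through $W$ and $V$, taming the individual $z^{-1}$ singularities of $g_+$ and $g_-$ at $E$. Substituting $g_+g_-=-(E_+E_-)^{-1}$ and $|W(\I h)|^2=W(\I h)V(-\I h)$ into \eqref{defP} and using $\ov{\hat R(\ov w)}=\hat R(w)^{-1}$ collapses the left boundary value to
\[
P(h)=-\Im\,\hat R(\I h).
\]
Writing $\hat R(w)=\sum_n r_n w^n$, the two evaluations give $\tfrac{d^jR}{dk^j}(0)=j!\,r_j$ and $\tfrac{d^jP}{dh^j}(0)=-j!\,\Im(\I^j r_j)$, so the assertion reduces to $\Re(\I^{j+1}r_j)=-\Im(\I^j r_j)$, which is the elementary identity $\Re(\I u)=-\Im u$.

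I expect the real obstacle to lie not in this bookkeeping nor in the Pl\"ucker step, but in the degenerate case $\hat W(E)=0$ permitted by Lemma~\ref{lem2.3}~II.(b). There $\wronsk_0(\phi_-,\phi_+)(E)=0$ (a simple zero in $z$), so $\hat R=-V/W$ is an indeterminate $0/0$; one must invoke $|R|=1$ to see that it nevertheless extends continuously across $E$, at the cost of one derivative, giving the claim only for $j\le m_0-2$ — exactly the reduction already visible in Lemma~\ref{lemestim3}. Carefully tracking these square-root singularities, together with the exceptional cases $E=\mu_j^\pm$ handled through $\phi_{\pm,E}$ and $g_{\pm,E}$, is where the argument demands the most care.
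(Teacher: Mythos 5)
Your argument is correct and is essentially the paper's own proof in different clothing: the decisive step in both is the Pl\"ucker identity applied to the quadruple $\phi_-(\pm w),\phi_+(\pm w)$ (the paper's auxiliary functions $\breve\phi_\pm$ are exactly your $\phi_\pm(-w)$ up to $o(w^{m_0-1})$), combined with the local $\sqrt{\la-E}$ structure and conjugation symmetries from Lemma~\ref{lemMhat}, so as to express $g_+(\la,0)g_-(\la,0)$ through $W$ and $V$ and tame the individual singularities. Your packaging of the endpoint as $P(h)=-\Im \hat R(\I h)$ with $\ov{\hat R(\ov w)}=\hat R(w)^{-1}$ is a slightly tidier bookkeeping than the paper's termwise comparison of Taylor polynomials, and your remark on losing one derivative when $\hat W(E)=0$ (giving $j\le m_0-2$) matches the paper's own footnote and the range actually obtained at the end of its proof.
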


\begin{proof}
Note that now we cannot proceed as in case 2) since now we no longer have
spectrum of multiplicity two to the right of $E$. In particular, we cannot use
$g_-(\la,0)^{-1}=-\wronsk_0(\phi_-,\overline{\phi_-})$ for $\la>E$, we do
not have the scattering relations at our disposal, and $\phi_-(\la)\not\in
C^{m_0}(E-\varepsilon, E+\varepsilon)$. Hence we need a different strategy.

Let\footnote{Again, otherwise replace $\phi_\pm(\la,x,0)$ by $\phi_{\pm,E}(\la,x,0)$
and $g_\pm(\la)$ by $g_{\pm,E}(\la)$ (cf.\ Lemma~\ref{lemMhat}) in the subsequent considerations.}
$E\notin \hat M_-(0)\cup\hat M_+(0)$. Consider $\phi_\pm(\la,x,0)$ and note
that for sufficiently small $\varepsilon$, we can write (see Lemma~\ref{lemMhat})
\beq\label{defin5}
\phi_-(\la,x,0) = \begin{cases}
f^-_1(h^2,x) + \I h f^-_2(h^2,x) + o(h^{m_0-1}),
& E-\varepsilon<\la \le E,\\
f^-_1(-k^2,x) + k f^-_2(-k^2,x) + o(k^{m_0-1}), &
 E+\varepsilon>\la
\ge E,
\end{cases}
\eeq
where $f^-_1(z,x), f^-_2(z,x)$ are real-valued functions which are
polynomials of degree $m_0-1$ with respect to $z$ and differentiable with
respect to $x$. Next, define
\beq\label{defin6}
\breve\phi_-(\la,x,0) = \begin{cases}
f^-_1(h^2,x) - \I h f^-_2(h^2,x), & \la \le E,\\
f^-_1(-k^2,x) - k  f^-_2(-k^2,x), & \la \ge E,
\end{cases}
\eeq
and note that  $\ov{\phi_-(\la,x,0)} = \breve\phi_-(\la,x,0)
+o(h^{m_0-1})$ for $E-\varepsilon<\la \le E$.

Similarly, we  write
\beq\label{defin7}
\phi_+(\la,x,0) = \begin{cases}
f^+_1(h^2,x) + h f^+_2(h^2,x) + o(h^{m_0-1}), &E-\varepsilon< \la \le E,\\
f^+_1(-k^2,x) - \I k f^+_2(-k^2,x) + o(k^{m_0-1}), &
E+\varepsilon>\la \ge E,
\end{cases}
\eeq
and define
\beq\label{defin8}
\breve\phi_+(\la,x,0) = \begin{cases}
f^+_1(h^2,x) - h f^+_2(h^2,x) , & \la \le E,\\
f^+_1(-k^2,x) + \I kf^+_2(-k^2,x), & \la \ge E,
\end{cases}
\eeq
implying $\ov{\phi_+(\la,x,0)} = \breve\phi_+(\la,x,0)+o(k^{m_0-1})$
for $E+\varepsilon>\la \ge E$. In particular, note that
\[
\I^j \frac{\pa^j}{\pa h^j} \breve\phi_\pm(\la,x,0) =  \frac{\pa^j}{\pa k^j}
\breve\phi_\pm(\la,x,0), \qquad \la=E,\quad 0 \leq j \leq m_0-1.
\]
Moreover,
\beq \label{gipm}
g_\pm(\la,0) = \pm
\wronsk(\phi_\pm(\la,.,0),\breve\phi_\pm(\la,.,0)) +o\big((\la-E)^{(m_0-1)/2}
\big),\quad \la\in (E-\varepsilon,E+\varepsilon).
\eeq
While the above Wronskian depends on $x$ ($\breve\phi_\pm$ do not solve \eqref{S.4}
in general), the leading order is independent of $x$. Here and in all following Wronskians below, we set $x=0$
(and of course $t=0$).
Now consider  (cf. \eqref{defP})
\[
P(\la) = \frac{-\I}{2g_+(\la,0)g_-(\la,0)\wronsk_0(\phi_-,\phi_+)
\wronsk_0(\overline{\phi_-},\phi_+)}, \quad \la < E,
\]
and set
\beq
\tilde{P}(\la) = \frac{\I}{2}
\frac{\wronsk(\phi_-,\breve\phi_-)\wronsk(\phi_+,\breve\phi_+)}
{\wronsk(\phi_-,\phi_+)\wronsk(\breve\phi_-,\phi_+)},
\quad \la \in (E-\varepsilon,E+\varepsilon),
\eeq
implying
\[
P(\la) = \tilde{P}(\la)+o\big((\la-E)^{(m_0-2)/2} \big),
\quad E-\varepsilon<\la\leq E
\]
In particular,\footnote{Note that  if $\wronsk(\phi_+,\phi_-)(E)=0$, we loose one derivative, in which case
we have $\wronsk(\phi_+,\phi_-)=C k(1+o(1))$ by Lemma~\ref{lem2.3} {\bf II.~(b)}.}
\beq
\frac{\pa^j}{\pa h^j} P(\la) = \I^j \frac{\pa^j}{\pa k^j}
\tilde{P}(\la), \qquad \la=E,\quad 0 \leq j \leq m_0-2.
\eeq
Using the Pl\"ucker identity
\[
\wronsk(f_1,f_2) \wronsk(f_3,f_4) + \wronsk(f_1,f_3)
\wronsk(f_4,f_2) + \wronsk(f_1,f_4) \wronsk(f_2,f_3) =0
\]
with
$f_1=\phi_-$, $f_2=\breve\phi_-$, $f_3=\phi_+$, and
$f_4=\breve\phi_+$ we can rewrite $\tilde{P}(\la)$
 as
\[
\tilde{P}(\la) = \frac{\I}{2}\left(-\frac {V(k)}{W(k)}
 +\frac{\breve W(k)}{\breve V(k)} \right),
\]
where
\[
W(k):= \wronsk(\phi_-,\phi_+), \quad V(k):=
\wronsk(\phi_-,\breve\phi_+),
\]
and
\[
\breve W(k):=
\wronsk(\breve \phi_-,\breve\phi_+), \quad \breve V(k):=
\wronsk(\breve\phi_-,\phi_+).
\]
Moreover, using \eqref{defin5}--\eqref{defin8},  one can verify that
\[
\frac{\breve W(k)}{\breve V(k)} =
\ov{\left(\frac{V(-k)}{W(-k)}\right)} + o(k^{m_0-1}).
\]
Now, since  $V(k)= \wronsk_0(\phi_-,\ov{\phi_+}) + o(k^{m_0-1})$,
we obtain
\beq R(k) =
-\frac {V(k)}{W(k)} + o(k^{m_0-2}).
\eeq
This implies
\beq
\Re\left(\I^{j+1}\,\frac{\pa^j }{d k^j} R(0)\right)= \I^j
\frac{\pa^j}{\pa k^j} \tilde{P}(0) =
\frac{\pa^j P}{\pa h^j}(0), \qquad j=0,\dots,m_0-2,
\eeq
and we are done.
\end{proof}

Finally we discuss the possibility of integrating the last (unbounded)
integrand of \eqref{sumFc} by parts. More precisely, we discuss the boundary
terms corresponding to the point $E_{2r_++3}^+=+\infty$ (again the
considerations are the same for the $+$ and $-$ cases, and we study only
the $+$ case). To begin, we recall the well-known asymptotic expressions
\[
m_+(\la)=\I\sqrt{\la}(1+o(1)),\quad \theta(\la)=\sqrt{\la}(1+o(1))
\]
as $\la\to\infty$. Moreover, (recall $\alpha_+(\la,t)=4\I(\sqrt\la)^3t (1+o(1))$)
we also have
\[
\frac{\pa^s}{\pa\theta^s} u_+(\la,x,t)=O(1), \qquad
\frac{\pa^s}{\pa\theta^s} \E^{\alpha_+(\la,t)} =O(t \theta^{2s}).
\]
As in the previous cases, the only interesting part is  the reflection
coefficient $R_+(\la,0)$ for which we have
\beq
\frac{\pa^s}{\pa\theta^s} R_+(\la,0)= O(\theta^{-n_0-1}),
\qquad s=0,\dots, m_0,
\eeq
as $\la\to\infty$ by Lemma~\ref{lem2.3} {\bf III.~(a)}. Hence we conclude that
\[
\frac{\pa^s}{\pa\theta_+^s} \rho_+(\la(\theta),x,y,t) =O(\theta^{2s-n_0-1}),
\quad \mbox{as } \la\to\infty,
\]
uniformly with respect to $x,y\in\R$ and $t\in[0,T]$ for any $T>0$. As a consequence
we can perform $m \leq \floor{\frac{n_0}{2}}$ partial integrations such that the boundary
terms at $\infty$ vanish.

In summary, supposing \eqref{2.111}, the maximum number of
integration by parts is determined by Lemma~\ref{lemest4}; that is it is determined
by the points $\inte \si_+\cap\pa\si_-$, and is given by
$\floor{\frac{m_0}{2}}$. So, excluding case 3), we have an
almost complete picture. However, up to this point, we have only
looked at $F_{R,+}(x,y,t)$ and have not considered derivatives with
respect to $x,y,t$. Fortunately, since $R_+(\la,0)$ is evidently
independent of these variables and all other terms of the integrand
\eqref{sumFc} (except for the last summand
in \eqref{sumFc}) can be differentiated as often as we please, these
derivatives do not affect our analysis. Moreover, for this last summand, one has only to
take into account that a partial derivative with respect to $x$ or
$y$ adds $O(\theta_+)$ (from $\E^ {\I\theta_+(x+y)}$) and a partial
derivative with respect to $t$ adds $O(\theta_+^3)$ (from
$\E^{\alpha_+(\la,t)}$). Thus we obtain the following result.

\begin{lemma}\label{lemfinal}
Let $q(x)$ satisfy \eqref{2.111}. Then
\beq\label{est10}
F_+(x,y,t)=\frac{1}{(x+y)^{\floor{\frac{m_0}{2}}}}
\left(H(x,y,t) +\int_{\tilde E}^{+\infty} \E^{\I\sqrt\la(x+y)
+4\I(\sqrt\la)^3 t}\,H_1(\la,x,y,t)d\la\right),
\eeq
where $\tilde E> \max\{E_{2r_+}^+, E_{2r_-}^-, 1\}$.
The function $H(x,y,t)$ is smooth on the set  $\mathcal D:=
[0,+\infty)\times[0,+\infty)\times[0,T]$. All
partial derivatives with respect to $x,y,t$ of function $H$ are
bounded on $\mathcal D$.
The function $H_1(\la,x,y,t)$ is bounded on $\la$ and smooth with
respect to $x,y,t\in\mathcal D$.  Moreover,
\beq\label{main5}
\frac{\pa^{l+s+k}}{\pa x^l\pa y^s\pa
t^k}\,H_1(\la,x,y,t)=o\left(\left(\sqrt\la\right)^{l+s+3k-n_0-2
+2 \floor{\frac{m_0}{2}}}\right)\quad \mbox{as } \la\to\infty,
\eeq
uniformly on $\mathcal D$.
\end{lemma}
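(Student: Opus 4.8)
The plan is to split $F_+(x,y,t)$ according to \eqref{Fhat}, to extract from every piece the common factor $(x+y)^{-\floor{m_0/2}}$, and to collect into $H$ everything that (after this extraction) stays bounded on $\mathcal D$ together with its $x,y,t$-derivatives, reserving for $H_1$ only the contribution of the last, unbounded interval of \eqref{sumFc}; all estimates are meant uniformly for $t$ in the compact set $[0,T]$. The two easy summands are disposed of first. The discrete sum in \eqref{Fhat} decays exponentially as $x+y\to+\infty$ together with all its derivatives, so after multiplication by $(x+y)^{\floor{m_0/2}}$ it is smooth and bounded on $\mathcal D$ and enters $H$. In the second summand of \eqref{Fhat} the factor $\hat\psi_+(\la,x,t)\hat\psi_+(\la,y,t)$ decays exponentially in $x+y$ for $\la\notin\si_+$, so that only small neighbourhoods of $\si_+\cap\si_-^{(1)}$ matter; there this term serves solely to supply the boundary contributions cancelling those of $F_{+,R}$, which I treat next.

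For the finite intervals of $F_{+,R}$ in \eqref{sumFc} two distinct difficulties arise. At an interior point $E\in\pa\si_-\cap\inte\si_+$ the coefficient $R_+(\la,0)$ is only smooth in the local variable $\sqrt{\la-E}$, so \eqref{sing} exhibits a non-integrable singularity for $s\ge2$; following the proof of Lemma~\ref{lemest4} I isolate these points by the cut-offs \eqref{func10}, split off the leading Taylor terms in $\sqrt{\theta-\theta(E)}$ and evaluate them via the parabolic-cylinder identity \eqref{kapp} (which decays exponentially in $x+y$), the smooth remainder being integrated by parts $\floor{m_0/2}$ times. At the band edges $E\in\pa\si_+$ the boundary terms \eqref{outint} produced by these integrations fall into the three cases listed before Lemma~\ref{lemestim3}: in Case~1 they vanish by Lemma~\ref{lemestim3}; in Cases~2 and~3 they are non-zero but cancel, to all orders $j=0,\dots,m_0-1$, against the boundary contributions of the second summand of \eqref{Fhat}, which is exactly the content of \eqref{import} reduced to \eqref{import1} and its Case-3 analogue. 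The two boundary terms produced at the artificial point $\tilde E$ cancel between the two adjacent intervals, since $\rho_+$ is smooth across $\tilde E\in\inte\si_+$. Each bounded interval therefore contributes $O((x+y)^{-\floor{m_0/2}})$, and, since every factor of $\rho_+$ apart from $R_+(\la,0)$ is smooth and uniformly bounded on $\mathcal D$, the same bound persists under arbitrary $x,y,t$-derivatives; multiplied by $(x+y)^{\floor{m_0/2}}$ these contributions also enter $H$.

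It remains to treat the unbounded interval $(\theta_+(\tilde E),\infty)$, which produces $H_1$. There are no bad points here, and by Lemma~\ref{lem2.3} {\bf III.~(a)} the coefficient $R_+(\la,0)$ together with its $\la$-derivatives is $o((\sqrt\la)^{-n_0-1})$; hence I may integrate by parts $\floor{m_0/2}$ times (admissible since $\floor{m_0/2}\le\floor{n_0/2}$ because $n_0\ge m_0+5$), the boundary term at $+\infty$ vanishing and the one at $\tilde E$ cancelling as above. Changing back from $\theta_+$ to $\la$ and factoring out the leading oscillation $\E^{\I\sqrt\la(x+y)+4\I(\sqrt\la)^3t}$ (recall $\theta_+(\la)=\sqrt\la(1+o(1))$ and $\alpha_+(\la,t)=4\I(\sqrt\la)^3t(1+o(1))$), this contribution becomes the integral in \eqref{est10}, with
\[
H_1(\la,x,y,t)=\I^{-\floor{m_0/2}}\,\theta_+'(\la)\,
\frac{\pa^{\floor{m_0/2}}}{\pa\theta_+^{\floor{m_0/2}}}\rho_+(\theta_+,x,y,t)\,
\E^{\I(\theta_+-\sqrt\la)(x+y)}\E^{2\alpha_+(\la,t)-4\I(\sqrt\la)^3t},
\]
where $\rho_+$ is given by \eqref{defrho} and the two residual exponentials are of modulus $O(1)$ on $\mathcal D$. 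The estimate \eqref{main5} then follows from Leibniz' rule: $\theta_+'(\la)=O((\sqrt\la)^{-1})$; the dominant term of $\pa_{\theta_+}^{\floor{m_0/2}}\rho_+$ places all $\theta_+$-derivatives on $\E^{2\alpha_+}$, each costing $O((\sqrt\la)^2)$, whence the power $2\floor{m_0/2}$, while the undifferentiated $R_+(\la,0)$ keeps its decay $o((\sqrt\la)^{-n_0-1})$; finally, as already noted before the statement, each explicit $x$- or $y$-derivative adds a factor $O(\sqrt\la)$ (from $\E^{\I\theta_+(x+y)}$) and each $t$-derivative a factor $O((\sqrt\la)^3)$ (from $\E^{2\alpha_+(\la,t)}$). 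Collecting the powers yields exactly the exponent $l+s+3k-n_0-2+2\floor{m_0/2}$, uniformly on $\mathcal D$; in particular $H_1$ is bounded in $\la$ and smooth in $x,y,t$.

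The main obstacle is the cancellation of the non-vanishing boundary terms in Cases~2 and~3: one must verify that after integrating by parts the two different summands of \eqref{Fhat} the surviving boundary contributions agree to all orders $j=0,\dots,m_0-1$, which is precisely \eqref{import}/\eqref{import1} and the sole reason the uniform decay $(x+y)^{-\floor{m_0/2}}$ is not destroyed. A secondary difficulty is that the non-integrable singularity \eqref{sing} at the Case-2 interior points forbids naive integration by parts and forces the cut-off/parabolic-cylinder detour of Lemma~\ref{lemest4} before the $\floor{m_0/2}$ integrations can be carried out; the remaining bookkeeping (exponential decay of the discrete and off-spectrum parts, smoothness and uniform boundedness of all factors of $\rho_+$ other than $R_+$, and the $O(1)$ modulus of the residual exponentials) is routine.
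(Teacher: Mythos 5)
Your proposal is correct and follows essentially the same route as the paper: the same three-way decomposition of \eqref{Fhat}, the same cutoff/parabolic-cylinder device at the points of $\pa\si_-\cap\inte\si_+$, the same appeal to Lemma~\ref{lemestim3} and the two cancellation lemmas for the boundary terms in Cases 1--3, and the same power counting ($O(\theta_+)$ per $x,y$-derivative, $O(\theta_+^3)$ per $t$-derivative, $o(\theta_+^{-n_0-1})$ from $R_+$, and the Jacobian $\theta_+'(\la)=O(\la^{-1/2})$) yielding \eqref{main5}.
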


Lemma \ref{lemfinal} shows that for the  integral in \eqref{est10} and its
derivatives with respect to $x,y,t$  to converge, it is
sufficient that $l+s+3k +2\floor{\frac{m_0}{2}}-n_0<0$.
A comparison to \eqref{4.311} shows that to guarantee a
classical solution (three derivatives with respect to $x$ and
one with respect to $t$) of the KdV equation, we need  at least
$l+s= 4, k=0$ and  $l+s= 1, k=1$ to hold; that is, we need $m_0$ and $n_0$ to satisfy $4+2\floor{\frac{m_0}{2}}-n_0<0$.
Since we also need $\floor{\frac{m_0}{2}}-2\geq 2$, this yields the conditions
$m_0\geq 8$ and $n_0 \geq 2\floor{\frac{m_0}{2}} +5$.

In particular, if \eqref{2.111} holds for all $m_0, n_0\in \N$ (Schwartz-type
perturbations), then the same is true for the solution. Thus this provides
a generalization of the main result from \cite{EGT} without any restriction
on the background spectra.

\bigskip
\noindent{\bf Acknowledgments.} We are very grateful to F.\ Gesztesy, E.Ya.\ Khruslov,
and V.A.\ Marchenko for helpful discussions. I.E.\ gratefully acknowledges the
extraordinary hospitality  of the Faculty of Mathematics at the University of
Vienna during extended stays 2008--2009, where parts of this paper were
written. G.T.\ gratefully acknowledges the stimulating atmosphere at the
Centre for Advanced Study at the Norwegian Academy of Science and Letters in Oslo
during June 2009 where parts of this paper were written as part of  the international
research program on Nonlinear Partial Differential Equations.

\end{document}